\newcommand{\bo}[1]{\boldsymbol{#1}}
\newcommand{\ur}[1]{\mathrm{#1}}
\renewcommand{\H}{\ur{H}}
\newcommand{\T}{\ur{T}}
\newcommand{\btr}[1]{\ur{tr}\negmed\left(#1\right)}
\newcommand{\bdet}[1]{\ur{det}\negmed\left(#1\right)}
\newcommand{\bmin}[1]{\ur{min}\negmed\left\{#1\right\}}
\newcommand{\bnorm}[1]{\left\lVert#1\right\rVert}
\newcommand{\rank}[1]{\ur{rank}\negmed\left(#1\right)}
\newcommand{\blogdet}[1]{\ur{log}_\ur{e}\ur{\,det}\negmed\left(#1\right)}
\newcommand{\bloggdet}[1]{\ur{log}_2\ur{\,det}\negmed\left(#1\right)}
\newcommand{\E}[1]{\ur{E}\negmed\left[#1\right]}
\newcommand{\bdiag}[1]{\ur{diag}\negmed\left(#1\right)}
\newcommand{\bO}[1]{\mathcal{O}\negmed\left(#1\right)}
\def\negstrip#1 #2\relax{-#1}
\newcommand*{\negmed}{\mkern-\thinmuskip}
\theoremstyle{general} \newtheorem{theorem}{Theorem}
\theoremstyle{general} \newtheorem{lemma}[theorem]{Lemma}
\theoremstyle{general} 
\theoremstyle{general} \newtheorem{proposition}{Proposition}
\theoremstyle{general} \newtheorem{p-corollary}{Corollary}[proposition]
\theoremstyle{general} 
\theoremstyle{general} 
\theoremstyle{remark}  \newtheorem{remark}{Remark}
\begin{document}
	\title{Downlink MIMO-RSMA with Successive Null-Space Precoding}
	\author{\IEEEauthorblockN{Aravindh Krishnamoorthy\rlap{\textsuperscript{\IEEEauthorrefmark{2}\IEEEauthorrefmark{1}}}\,\,\,  and Robert Schober\rlap{\textsuperscript{\IEEEauthorrefmark{2}}}\\
	\IEEEauthorblockA{\small \IEEEauthorrefmark{2}Friedrich-Alexander-Universit\"{a}t Erlangen-N\"{u}rnberg, Germany\\
	\IEEEauthorrefmark{1}Fraunhofer Institute for Integrated Circuits (IIS) Erlangen, Germany}}\thanks{This work was supported by BMBF Project 6G-RIC (Project ID: 16KISK023). This article was presented in part at the Workshop on Rate-Splitting (Multiple Access) for 6G held during the IEEE Intl. Commun. Conf. (ICC) 2021 \cite{Krishnamoorthy2021a}.}}
	\maketitle
	
	\begin{abstract}
		In this paper, we consider the precoder design for an underloaded or critically loaded downlink multi-user multiple-input multiple-output (MIMO) communication system. We propose novel precoding and decoding schemes which enhance system performance based on rate splitting at the transmitter and single-stage successive interference cancellation at the receivers. The proposed successive null-space (SNS) precoding utilizes linear combinations of the null-space basis vectors of the successively augmented MIMO channel matrices of the users as precoding vectors to adjust the inter-user-interference experienced by the receivers. We formulate a non-convex weighted sum rate optimization problem for the precoding vectors and the associated power allocation for the proposed SNS-based MIMO-rate-splitting multiple access (RSMA) scheme. We obtain a suboptimal solution for this problem via successive convex approximation. Moreover, we study the robustness of the proposed precoding scheme to imperfect channel state information (CSI) at the base station via derivative-based sensitivity analysis. Our analysis and simulation results reveal the enhanced performance and robustness of the proposed SNS-based MIMO-RSMA scheme over several baseline multi-user MIMO schemes, especially for imperfect CSI.
	\end{abstract}
	
	\section{Introduction}
	Exploiting multiple-input multiple-output (MIMO) communication channels is crucial for the 5th generation (5G) and beyond communication systems in order to meet the ever-increasing demand for mobile data services \cite{Saad2020}. To this end, motivated by the recent improvements regarding the signal processing capabilities of user terminals, MIMO concepts capitalizing on successive interference cancellation (SIC) at the receiver such as MIMO non-orthogonal multiple access (MIMO-NOMA) \cite{Ding2017,Krishnamoorthy2020,Makki2020,Krishnamoorthy2021,Liu2022} and MIMO rate-splitting multiple-access (MIMO-RSMA) \cite{Mao2018,Zhou2020,Dizdar2020a,Dizdar2021} have been proposed in the context of beyond 5G downlink communication systems.
		
	The MIMO-NOMA literature has mostly focused on overloaded communication systems where the combined number of antennas at the user terminals is larger than that at the BS, a scenario in which MIMO-NOMA is most beneficial, see references in \cite{Ding2017,Makki2020}. However, in order to exploit the potential performance gains, for $K$ users, MIMO-NOMA requires in total $K(K-1)/2$ stages of SIC at the receivers, which incurs a high complexity if there are more than a few users. Consequently, the authors in \cite{Krishnamoorthy2020, Krishnamoorthy2021} proposed low-complexity two-user MIMO-NOMA schemes which exploited the null spaces of the MIMO channels of the users. However, for underloaded or critically loaded communication systems, where the combined number of receive antennas is smaller than or equal to the number of transmit antennas, these schemes have a considerable performance gap to the dirty paper coding (DPC) upper bound (UB) \cite{Vishwanath2003}.
	
	On the other hand, although several MIMO-RSMA schemes for overloaded systems have been proposed, e.g., see \cite{Piovano2016,Joudeh2017,Kaulich2021}, the MIMO-RSMA literature has mostly focused on underloaded or critically loaded communication systems \cite{Mao2018,Zhou2020,Dizdar2020a,Dizdar2021}. MIMO-RSMA can be seen as a generalization of MIMO-NOMA and space division multiplexing, which supports multiple hierarchical coded layers, see \cite{Mao2018} for details. Nevertheless, in this paper, in order to limit the decoding complexity, we restrict ourselves to single-layer MIMO-RSMA \cite{Mao2018}, which necessitates only a single stage of SIC at the receivers. Single-layer MIMO-RSMA utilizes a single common message (CM), intended for all users, along with the conventional MU-MIMO private streams for the users, which are precoded using a multi-user linear precoder (MU-LP). This hybrid structure, in addition to its enhanced performance, allows for a trade-off between performance and robustness to imperfect channel state information (CSI) at the base station (BS) via suitable power allocation (PA) between the CM and the private streams, see \cite[Proof of Theorem 1]{Joudeh2016},\cite{Mao2020} for details. Nevertheless, the robustness of MIMO-RSMA with respect to imperfect CSI can be further improved by careful design of the MU-LP.  Moreover, careful PA is crucial for realizing the performance benefits of MIMO-RSMA.
		
	Conventional MU-LP schemes such as zero forcing (ZF) \cite{Wiesel2008}, regularized zero forcing (RZF) \cite{Peel2005}, and block diagonalization (BD) \cite{Spencer2004} have been utilized for MIMO-RSMA, see, e.g., \cite{Flores2019, Flores2020}. However, in practice, these schemes cause a large performance gap to the DPC UB, see, e.g., \cite[Fig. 8]{Krishnamoorthy2021}. Furthermore, as ZF and BD precoders are designed to avoid inter-user-interference (IUI), imperfect CSI causes a significant performance degradation. On the other hand, the low-complexity linear successive allocation (LISA) scheme proposed in \cite{Guthy2009, Utschick2018} eliminates IUI through a combination of successive projections and ZF, see \cite{Guthy2009} for more details. LISA has been studied for overloaded MIMO-RSMA systems with minimum rate constraints in \cite{Kaulich2021}. The successive encoding and successive allocation method (SESAM) in \cite{Tejera2006} eliminates IUI by using a combination of successive projections and coding, e.g., DPC. However, eliminating IUI via coding incurs a high computational complexity. Alternatively, the MU-LP matrices may also be directly computed via numerical optimization. However, as we shall show via analysis, this necessitates a high computational complexity and, in most cases, the resulting precoders lack robustness to imperfect CSI. Hence, the development of novel MU-LP designs achieving high performance and robustness with reasonable computational complexity is crucial for realizing the full potential of MIMO-RSMA.
	
	To this end, in this paper, we propose a novel successive null space (SNS) precoder for the private messages in MIMO-RSMA. The proposed SNS precoder utilizes linear combinations of the null-space basis vectors of the successively augmented MIMO channel matrices of the users as precoding vectors to adjust the IUI experienced by the receivers for enhancing the performance and robustness. Furthermore, we propose an optimization based framework for obtaining the linear precoding vectors and the PA. Moreover, we study the impact of CSI imperfection using derivative-based sensitivity analysis (SA), see \cite{Kucherenko2016} and references therein, which yields important insights into the impact of imperfect CSI on performance. Interestingly, this analysis is also applicable when the CSI error statistics are unknown.

	This paper builds upon the conference version in \cite{Krishnamoorthy2021a}, which proposed SNS-based MIMO-RSMA for perfect CSI. In this paper, we extend the scheme to imperfect CSI, present a performance analysis based on SA, and provide derivations for the matrix-valued first-order approximations used in \cite{Krishnamoorthy2021a}. Furthermore, we significantly expand the simulation section to verify the performance of SNS-based MIMO-RSMA. The main contributions of this paper can be summarized as follows.
	\begin{itemize}
		\item We present the proposed SNS-based precoders for downlink MIMO-RSMA.
		\item We formulate an optimization problem for maximization of the weighted sum rate (WSR) of SNS-based MIMO-RSMA, and solve it via successive convex approximation (SCA) \cite{Razaviyayn2014} to obtain a feasible lower bound (LB) on the performance.
		\item We analyze the impact of imperfect CSI on the performance of the proposed SNS-based MIMO-RSMA scheme via derivative-based SA and numerical simulations.
	\end{itemize}
	
	Moreover, we compare the performance LB for the proposed scheme with several baseline schemes including the DPC UB \cite{Vishwanath2003}, RZF and ZF precoding \cite{Peel2005,Wiesel2008}, BD precoding \cite{Spencer2004}, and BD-based MIMO-RSMA \cite{Flores2019} via computer simulations for both perfect and imperfect CSI. Our results show that, for both perfect and imperfect CSI, the proposed SNS-based MIMO-RSMA scheme outperforms RZF and ZF precoding, BD precoding, and BD-based MIMO-RSMA. We note that the proposed SNS-based MIMO-RSMA necessitates SIC at the users, which increases the receiver complexity compared to conventional MU-LP schemes. However, SIC at the users is deemed feasible in future generation receivers with improved signal processing capabilities, see, e.g., \cite{Ding2017,Liu2022}.
	
	The remainder of this paper is organized as follows. In Section \ref{sec:prel}, we present the system model, and briefly describe the baseline precoding schemes. The proposed SNS precoding and decoding schemes are provided in Section \ref{sec:pd}. WSR optimization for perfect CSI is studied in Section \ref{sec:mwsr}. The derivative-based SA for the SNS precoders and the adaptation of the WSR optimization problem for imperfect CSI are described in Section \ref{sec:mwsri}. Simulation results are presented in Section \ref{sec:sim}, and the paper is concluded in Section \ref{sec:con}.
		
	\emph{Notation:}
	Boldface capital letters $\bo{X}$ and boldface lower case letters $\bo{x}$ denote matrices and vectors, respectively. $\bo{X}^\T$, $\bo{X}^\H$, $\btr{\bo{X}}$, and $\bdet{\bo{X}}$ denote the transpose, Hermitian transpose, trace, and determinant of matrix $\bo{X}$, respectively. $\bnorm{\bo{x}}$ and $\bnorm{\bo{X}}$ denote the Euclidean norm and the induced Euclidean norm of vector $\bo{x}$ and matrix $\bo{X},$ respectively. Furthermore, $\bo{X}^+$ and $\bo{X}^\frac{1}{2}$ denote the Moore-Penrose pseudo-inverse and a square root of matrix $\bo{X},$ respectively. $\mathbb{C}^{m\times n}$ and $\mathbb{R}^{m\times n}$ denote the sets of all $m\times n$ matrices with complex-valued and real-valued entries, respectively.  $\bo{I}_N$ denotes the $N\times N$ identity matrix, and $\bo{0}$ denotes the all zero matrix of appropriate dimension. The circularly symmetric complex Gaussian (CSCG) distribution with mean vector $\bo{\mu}$ and covariance matrix $\bo{\Sigma}$ is denoted by $\mathcal{CN}(\bo{\mu},\bo{\Sigma})$; $\sim$ stands for ``distributed as''. $\E{\cdot}$ denotes statistical expectation.
	
	\section{Preliminaries}
	\label{sec:prel}
	In this section, we present the downlink MIMO system model, the imperfect CSI model, and brief reviews of RZF, ZF, and BD precoding as well as BD-based MIMO-RSMA.
	
	\subsection{System Model}
	\label{sec:sm}
	We consider an \emph{underloaded} or \emph{critically loaded} downlink MU-MIMO communication system comprising a BS with $N$ transmit antennas and $K$ users equipped with $M_k,k=1,\dots,K,$ antennas, such that $N \geq \sum_{k=1}^{K} M_k.$

	The MIMO symbol vectors of the users are constructed using RS, as described in the following. First, the private message of the $k$-th user, $k=1,\dots,K,$ is encoded into a MIMO symbol vector $\bo{s}_{k} \in \mathbb{C}^{M_k\times 1},$ where $\E{\bo{s}_{k} \bo{s}_{k}^\H} = \bo{I}_{M_k}, \forall\,k.$ Additionally, a CM including messages to all downlink users is encoded into a MIMO symbol vector $\bo{s}_{\ur{c}} \in \mathbb{C}^{M\times 1}, \E{\bo{s}_{\ur{c}} \bo{s}_{\ur{c}}^\H} = \bo{I}_M,$ where $M = \bmin{M_k,k=1,\dots,K}.$ Symbol vectors $\bo{s}_{\ur{c}}$ and $\bo{s}_{k},\forall\,k,$ are assumed to be statistically independent. Furthermore, parameter $\eta_{\ur{c},k},$ $0 \leq \eta_{\ur{c},k} \leq 1, k=1,\dots,K,$ $\sum_{k=1}^{K}\eta_{\ur{c},k} = 1,$ is used to assign a fraction $\eta_{\ur{c},k}$ of the available bits in the CM to user $k.$ $\eta_{\ur{c},k} = 0$ implies that user $k$'s message is excluded from the CM.
	
	Next, the common MIMO symbol vector, $\bo{s}_{\ur{c}},$ and the private MIMO symbol vector, $\bo{s}_{k}, k=1,\dots,K,$ are precoded using linear precoders $\bo{P}_{\ur{c}} \in \mathbb{C}^{N\times M}$ and $\bo{P}_{k} \in \mathbb{C}^{N\times M_k},$ respectively, and the precoded symbol vectors are superimposed to obtain the transmit signal $\bo{x} = \bo{P}_{\ur{c}}\bo{s}_{\ur{c}} + \sum_{k = 1}^{K} \bo{P}_{k}\bo{s}_{k}.$ The transmit power constraint at the BS is as follows:
	\begin{align}
		\btr{\bo{P}_{\ur{c}} \bo{P}_{\ur{c}}^\H} + \sum_{k = 1}^{K} \btr{\bo{P}_{k} \bo{P}_{k}^\H} \leq P_\T, \label{eqn:p1}
	\end{align}
	where $P_\ur{T}$ denotes the available transmit power. The transmitter is schematically illustrated in Figure \ref{fig:sm}. Let $\frac{1}{\sqrt{\mathstrut L_k}} \bo{H}_k \in \mathbb{C}^{M_k\times N}$ denote the MIMO channel matrix between the BS and user $k,$ where scalar $L_k$ models the path loss between the BS and user $k,$ and $\bo{H}_k$ models the small scale fading. Here, we assume that all MIMO channel matrices have full row rank\footnote{We note that a row-rank deficient MIMO channel matrix can be transformed into a full row-rank matrix with fewer effective receive antennas via singular value decomposition, see, e.g., \cite[App. C]{Scutari2009}.}. Then, the received signal at user $k$ is given by
	\begin{align}
		\bo{y}_k &{}= \frac{1}{\sqrt{\mathstrut L_k}} \bo{H}_k\bo{x} + \bo{z}_k \nonumber\\ &{}= \frac{1}{\sqrt{\mathstrut L_k}} \bo{H}_k \Big(\bo{P}_{\ur{c}}\bo{s}_{\ur{c}} + \sum_{k' = 1}^{K} \bo{P}_{k'}\bo{s}_{k'}\Big) + \bo{z}_k, \label{eqn:sm:y_k}
	\end{align}
	where $\bo{z}_k \in \mathbb{C}^{M_k\times 1} \sim \mathcal{CN}(\bo{0}, \sigma^2\bo{I}_{M_k})$ denotes the complex additive white Gaussian noise (AWGN) vector at user $k.$
	
	While we assume perfect CSI in Sections \ref{sec:pd} and \ref{sec:mwsr}, imperfect CSI is considered in Section \ref{sec:mwsri}. In the following, we describe the adopted imperfect CSI model.
	
	\subsection{Imperfect CSI Model}
	\label{sec:impmimo}
	In Section \ref{sec:mwsri}, we assume that only quantized and outdated MIMO channel matrices $\bar{\bo{H}}_k,k=1,\dots,K,$ are available at the BS. Nevertheless, we assume that the BS knows\footnote{This assumption is motivated by the slow variation of the path loss $L_k$ and the resulting low feedback requirement.} scalar $L_k$ perfectly, and the users know their own MIMO channel matrices perfectly\footnote{In practice, users can estimate their own MIMO channel matrices frequently and accurately, e.g., by exploiting orthogonal, high-power pilot sequences transmitted by the BS. Furthermore, residual estimation errors can be incorporated into the receiver noise model \cite{Wang2007,Eraslan2013}, thereby allowing the use of the received signal model in (\ref{eqn:sm:y_k}) also for imperfect CSI.}. We model the estimated MIMO channel matrices at the BS as follows\footnote{We note that several models for capturing the impact of estimation errors have been proposed in the literature, see, e.g., references in \cite{Joudeh2016}. Nevertheless, in this paper, we restrict ourselves to the versatile additive error model, which is widely used in the RSMA literature, e.g., \cite{Mao2018,Dizdar2021,Joudeh2016,Flores2019, Flores2020}.}:
	\begin{align}
		\frac{1}{\sqrt{\mathstrut L_k}} \bar{\bo{H}}_k = \frac{1}{\sqrt{\mathstrut L_k}} \bo{H}_k + \frac{1}{\sqrt{\mathstrut L_k}} \Delta\bo{H}_k, \quad k=1,\dots,K, \label{eqn:hkd}
	\end{align}
	where $L_k$ and $\bo{H}_k$ denote the actual path loss coefficient and the MIMO channel matrix of user $k,$ respectively, as earlier, and $\Delta\bo{H}_k \in \mathbb{C}^{M_k\times N}$ models the estimation error. Furthermore, we assume $\|\Delta\bo{H}_k\| < \|\bar{\bo{H}}_k\|,$ i.e., the CSI error is small compared to the CSI itself.
		
	\begin{figure*}
		\begin{minipage}{0.45\textwidth}
			\centering
			\includegraphics[width=\textwidth]{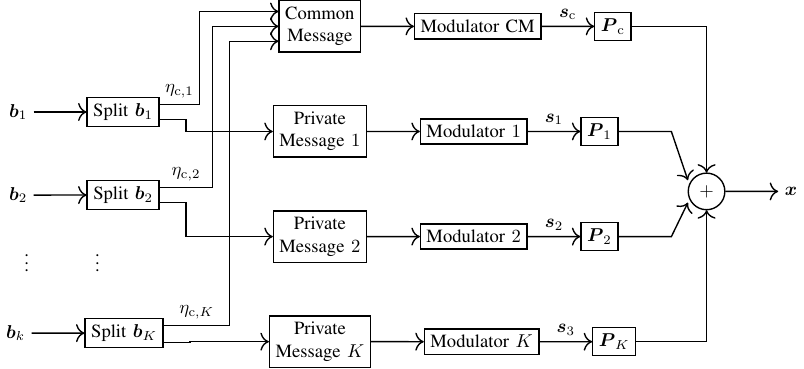}
			\caption{Schematic diagram of a single-layer MIMO-RSMA transmitter illustrating the generation of the transmit signal $\bo{x}$ from the user bit-streams $\bo{b}_k,k=1,\dots,K.$}
			\label{fig:sm}
		\end{minipage}%
		\hspace{0.05\textwidth}%
		\begin{minipage}{0.45\textwidth}
			\centering
			\resizebox{0.85\textwidth}{!}{%
				\renewcommand{\arraystretch}{1.5}
				\begin{tabular}{cc|c|c|c|ccc|c|cl}
					\hline
					Symbol/User      && $1$  & $2$ & $3$ &&        && $K$ && Precoder \\\hline\hline
					$\bo{s}_{\ur{c}}$ && \cellcolor{Turquoise}D    & \cellcolor{Turquoise}D   &  \cellcolor{Turquoise}D  &\cellcolor{Turquoise}&\cellcolor{Turquoise}$\cdots$&\cellcolor{Turquoise}& \cellcolor{Turquoise}D   && $\bo{P}_{\ur{c}}$ \\\hline
					$\bo{s}_{1}$ && \cellcolor{LimeGreen}S    & \cellcolor{Goldenrod}I   & \cellcolor{Goldenrod}I  &\cellcolor{Goldenrod}&\cellcolor{Goldenrod}$\cdots$&\cellcolor{Goldenrod}& \cellcolor{Goldenrod}I   && $\bo{X}_{1}^\frac{1}{2}$ \\\hline
					$\bo{s}_{2}$ &&      & \cellcolor{LimeGreen}S   &  \cellcolor{Goldenrod}I  &\cellcolor{Goldenrod}&\cellcolor{Goldenrod}$\cdots$&\cellcolor{Goldenrod}& \cellcolor{Goldenrod}I   && $\bo{\Psi}_{2} \bo{X}_{2}^\frac{1}{2}$ \\\hline
					$\vdots$  &&      &     &     &&        &&     && $\vdots$ \\\hline
					$\bo{s}_{K}$ &&      &     &     &&&& \cellcolor{LimeGreen}S   && $\bo{\Psi}_{K} \bo{X}_{K}^\frac{1}{2}$ \\\hline
			\end{tabular}}
			\medskip
			
			\resizebox{0.5\textwidth}{!}{%
				\begin{tabular}{|cl|}
					\hline
					\cellcolor{LimeGreen}S & Privately decoded symbol \\
					\cellcolor{Turquoise}D & Commonly decoded symbol \\
					\cellcolor{Goldenrod}I & Symbols treated as noise\\
					\hline
			\end{tabular}}
			\caption{Schematic diagram of the proposed SNS precoding and decoding schemes.}
			\label{fig:decoding}
		\end{minipage}
	\end{figure*}
	
	\subsection{Regularized Zero Forcing based Precoding and Decoding}
	\label{sec:zf}
	RZF precoding is a robust technique widely used for underloaded or critically loaded downlink MIMO communication systems \cite{Peel2005}. For conventional RZF precoding, no CM is used, i.e., $\bo{P}_\ur{c} = \bo{0},$ and the precoders for the private messages $\bo{P}_k,k=1,\dots,K,$ are chosen as:
	\begin{align}
		\begin{bmatrix}
			\bo{P}_1 & \dots & \bo{P}_K
		\end{bmatrix} &{}= \bo{H}^\H\big(\bo{H}\bo{H}^\H + \alpha\bo{I}_{\sum_{k=1}^{K}M_k}\big)^{-1} \nonumber\\
					&\quad\times\begin{bmatrix}
					(\bo{D}_1^{\textrm{ZF}})^\frac{1}{2} & & \\ & \ddots & \\ & & (\bo{D}_K^{\textrm{ZF}})^\frac{1}{2}
					\end{bmatrix}, \label{eqn:pkzf}
	\end{align}
	where $\bo{H} = \begin{bmatrix} \bo{H}_1^\T & \dots & \bo{H}_K^\T \end{bmatrix}^\T$ and $\bo{D}_k^{\textrm{ZF}} \in \mathbb{R}^{M_k \times M_k} = \bdiag{p_{k,1}^{\textrm{ZF}},\dots,p_{k,M_k}^{\textrm{ZF}}}$ is the diagonal power-allocation matrix chosen such that $\btr{\begin{bmatrix}
				\bo{P}_1 & \dots & \bo{P}_K
			\end{bmatrix} \begin{bmatrix}
		\bo{P}_1 & \dots & \bo{P}_K
	\end{bmatrix}^\H} \leq P_\T.$ Here, $\alpha$ is a design parameter which is used to regularize $\bo{H}\bo{H}^\H.$ In \cite{Peel2005}, the optimal value of $\alpha$ for maximizing the received signal-to-noise ratio (SNR) was found to be $\alpha = \frac{\sum_{k=1}^{K} M_k \sigma^2}{P_\T},$ which we use in our simulations in Section \ref{sec:sim}. When $\alpha=0,$ RZF reduces to ZF \cite{Wiesel2008}, in which case (\ref{eqn:sm:y_k}) can be simplified to $\bo{y}_k = \frac{1}{\sqrt{\mathstrut L_k}} (\bo{D}_k^{\textrm{ZF}})^\frac{1}{2}\bo{s}_{k} + \bo{z}_k.$ Hence, assuming perfect CSI at the BS, ZF precoding completely eliminates self-interference and IUI for the private symbols, greatly simplifying the receiver design.
	
	\subsection{Block Diagonalization Precoding and Decoding}
	\label{sec:bd}
	For conventional BD precoding \cite{Spencer2004}, similar to RZF precoding, no CM is used, i.e., $\bo{P}_\ur{c} = \bo{0},$ and the precoders for the private messages $\bo{P}_k,k=1,\dots,K,$ are chosen as follows. In accordance with \cite{Spencer2004}, let $\bo{\Psi}_k^{\textrm{BD}} \in \mathbb{C}^{N\times M_{k}}$ denote a matrix whose columns are the unit-length basis vectors of the null space of the following augmented matrix: $\bo{F}^{\textrm{BD}}_k = \begin{bmatrix}
			\bo{H}_{1}^\T &  \dots & \bo{H}_{k-1}^\T & \bo{H}_{k+1}^\T &  \dots & \bo{H}_{K}^\T 
		\end{bmatrix}^\T\negthickspace.$
	Furthermore, let $\bo{H}_k\bo{\Psi}_k^{\textrm{BD}} = \bo{U}_k^{\textrm{BD}} \bo{\Sigma}_k^{\textrm{BD}} \bo{V}_k^{\textrm{BD}}$ via singular value decomposition (SVD), where $\bo{U}_k^{\textrm{BD}}, \bo{V}_k^{\textrm{BD}} \in \mathbb{C}^{M_k\times M_k}$ are unitary matrices and $\bo{\Sigma}_k^{\textrm{BD}} \in \mathbb{R}^{M_k\times M_k}$ is a diagonal matrix containing the $M_k$ singular values of $\bo{H}_k\bo{\Psi}_k^{\textrm{BD}}$ on the main diagonal. Then, the precoder for the private message of the $k$-th user is chosen as $\bo{P}_k = \bo{\Psi}_k^{\textrm{BD}} \bo{V}_k^{\textrm{BD}} (\bo{D}_k^{\textrm{BD}})^\frac{1}{2},$ where $\bo{D}_k^{\textrm{BD}} \in \mathbb{R}^{M_k \times M_k} = \bdiag{p_{k,1}^{\textrm{BD}},\dots,p_{k,M_k}^{\textrm{BD}}}$ is the diagonal power-allocation matrix chosen such that $\sum_{k=1}^{K} \sum_{l=1}^{M_k} p_{k,l}^{\textrm{BD}} \leq P_\T.$ With BD precoding, (\ref{eqn:sm:y_k}) can be rewritten as follows:
	\begin{align}
		\bo{y}_k &= \bo{H}_k\bo{\Psi}_k^{\textrm{BD}} \bo{V}_k^{\textrm{BD}} (\bo{D}_k^{\textrm{BD}})^\frac{1}{2}\bo{s}_{k} + \bo{z}_k. \label{eqn:y_k_bd}
	\end{align}
	As can be seen from (\ref{eqn:y_k_bd}), assuming perfect CSI at the BS, BD precoding eliminates IUI for the private symbols completely, thereby simplifying the receiver design. Self-interference can also be eliminated without performance loss by using $(\bo{U}_k^{\textrm{BD}})^\H$ as the detection matrix at user $k,$ see \cite{Spencer2004} for details.
	
	\subsection{BD-based MIMO-RSMA}
	\label{sec:bdrsma}
	An extension of BD precoding for MIMO-RSMA was presented in \cite{Flores2019} where a scalar CM symbol $s_c$ was assumed. Nevertheless, the scheme in \cite{Flores2019} can be generalized to MIMO CM in a straightforward manner. For BD-based MIMO-RSMA using a generic common precoder $\bo{P}_{\ur{c}}$ for CM, (\ref{eqn:sm:y_k}) can be rewritten as:
	\begin{align}
		\bo{y}_k = \frac{1}{\sqrt{\mathstrut L_k}} \bo{H}_k \bo{P}_{\ur{c}}\bo{s}_{\ur{c}} + \bo{H}_k\bo{\Psi}_k^{\textrm{BD}} \bo{V}_k^{\textrm{BD}} (\bo{D}_k^{\textrm{BD}})^\frac{1}{2}\bo{s}_{k} + \bo{z}_k.
	\end{align}

	\subsection{Shortcomings of RZF, ZF, and BD Precoding}
	\label{sec:shortcoming}
	While RZF, ZF, and BD precoding are popular schemes, they suffer from several performance deficiencies. For Gaussian MIMO channels, the performance of ZF precoding deteriorates as $\sum_{k=1}^{K} M_k \to N$ due to the required inversion of a poorly conditioned matrix. The performance of RZF precoding, which mitigates this deterioration by regularizing the matrix inversion, is sensitive to the choice of design parameter $\alpha,$ and has poor performance at low SNRs \cite{Sung2009}. On the other hand, for both underloaded and critically loaded systems, BD precoding suffers from poor performance because the precoding vectors, $\bo{\Psi}_k^{\textrm{BD}},$ lie in a low ($M_k$)-dimensional subspace of the available $N$-dimensional vector space. This effect is especially pronounced in underloaded systems and for correlated MIMO channels. Hence, BD precoding is unable to fully exploit the degrees of freedom (DoFs) available at the BS. Thus, the performance of RZF, ZF, and BD precoding is substantially lower than that of DPC. Furthermore, BD-based MIMO-RSMA inherits the shortcomings of BD precoding described above, and its performance also has a significant gap to the DPC UB. Moreover, as ZF and BD precoding are designed to avoid IUI, they lack robustness to imperfect CSI knowledge at the BS. 

	Hence, in the following, we develop the proposed SNS MU-LP scheme that overcomes these limitations and provides enhanced performance and robustness.
	
	\section{Proposed Successive Null-Space based Precoding and Decoding}
	\label{sec:pd}
	In this section, we present the proposed SNS precoding and decoding schemes. The proposed schemes utilize a fixed precoder structure based on the null spaces of the MIMO channels of the users. In this section, we assume perfect CSI. The changes required to account for imperfect CSI at the BS are detailed in Section \ref{sec:mwsri}.
	
	\subsection{Proposed Precoding Scheme}
	\label{sec:precoding}
	Let $\bo{\Psi}_k \in \mathbb{C}^{N\times N_k},$ $N_k = N-\sum_{k' = 1}^{k-1} M_{k'},$ denote a matrix whose columns are the unit-length basis vectors of the null space of the following augmented matrix:
	\begin{align}
		\bo{F}_k = \begin{bmatrix}
		\bo{H}_{1}^\T & \bo{H}_{2}^\T & \dots & \bo{H}_{k-1}^\T
		\end{bmatrix}^\T\negthickspace, \label{eqn:augh}
	\end{align}
	with the convention $\bo{\Psi}_{1} = \bo{I}_N.$ The proposed \emph{SNS precoder} for user $k$ is constructed as follows:
	\begin{align}
		\bo{P}_{k} &= \bo{\Psi}_{k} \bo{X}_{k}^\frac{1}{2}, \label{eqn:pk}
	\end{align}
	where $\bo{X}_{k} \in \mathbb{C}^{N_k\times N_k}$ is a symmetric, positive semi-definite matrix with rank $M_k,$ which is to be optimized. Since matrices $\bo{X}_{k},k=1,\dots,K,$ have rank $M_k,$ they can be factorized as $\bo{X}_{k} = \bo{X}_{k}^\frac{1}{2} (\bo{X}_{k}^\frac{1}{2})^\H,\forall\,k,$ where the rectangular matrix factors $\bo{X}_{k}^\frac{1}{2} \in \mathbb{C}^{N_k\times M_k}$ have $M_k$ columns each. On the other hand, the precoder for CM, $\bo{P}_{\ur{c}},$ is an unconstrained full matrix with no predefined structure. Now, since
	\begin{align}
		\btr{\bo{P}_{k} \bo{P}_{k}^\H} &= \btr{\bo{\Psi}_{k} \bo{X}_{k} \bo{\Psi}_{k}^\H} = \text{tr}\Big({\underbrace{\bo{\Psi}_{k}^\H \bo{\Psi}_{k}}_{\bo{I}_{N_k}} \bo{X}_{k}}\Big) 	= \btr{\bo{X}_{k}}, \label{eqn:trsimp}
	\end{align}
	for $k=1,\dots,K,$ the condition in (\ref{eqn:p1}) can be rewritten for SNS-based MIMO-RSMA as follows:
	\begin{align}
		\btr{\bo{P}_{\ur{c}} \bo{P}_{\ur{c}}^\H} + \sum_{k = 1}^{K} \btr{\bo{X}_{k}} \leq P_\T \label{eqn:p2}.
	\end{align}
	
	\begin{remark}
		As $\bo{P}_{k}$ utilizes linear combinations of the basis vectors in $\bo{\Psi}_{k},$ symbol vectors precoded with $\bo{P}_{k}$ do not cause IUI to users $k'=1,\dots,k-1,$ however, they cause IUI to the remaining users $k'=k+1,\dots,K.$ Furthermore, for user $k,$ the BD basis vectors, $\bo{\Psi}_k^{\textrm{BD}},$ lie in a subspace spanned by the SNS basis vectors $\bo{\Psi}_{k}.$ Hence, in $\bo{P}_{k},$ the IUI can be adjusted by judiciously combining the basis vectors in $\bo{\Psi}_{k}$ using $\bo{X}_{k}^\frac{1}{2}.$ In the extreme case, IUI can be completely eliminated by selecting only the BD subspace, see our analysis in Section \ref{sec:bdcomp} for details.
	\end{remark}
	
	\begin{remark}
		As seen from (\ref{eqn:augh}) and (\ref{eqn:pk}), the precoder matrices depend on the user labels $1,\dots,K.$ Users with lower indices have more degrees of freedom to choose their precoder $\bo{P}_{k}.$ Consequently, the user rates depend on the user labeling. Hence, ideally, user rate optimization should be carried out over all permutations of user labels.
	\end{remark}
	
	Based on the SNS precoder matrices given above, (\ref{eqn:sm:y_k}) can be rewritten as follows:
	\begin{align}
		\bo{y}_k &= \frac{1}{\sqrt{\mathstrut L_k}}\bo{H}_k\bo{P}_{\ur{c}}\bo{s}_{\ur{c}} + \frac{1}{\sqrt{\mathstrut L_k}}\bo{H}_k \sum_{k'=1}^{k} \bo{\Psi}_{k'} \bo{X}_{k'}^\frac{1}{2}\bo{s}_{k'} + \bo{z}_k. \label{eqn:y_r}
	\end{align}
	
	\begin{remark}
		Note that the proposed SNS-based precoding and decoding can also be utilized without CM by setting $\bo{P}_\ur{c} = \bo{0}.$
	\end{remark}
	
	\subsection{Decoding Scheme}
	\label{sec:decoding}
	At user $k,$ decoding and SIC are performed as follows. First, using $\bo{y}_k,$ the common symbol vector $\bo{s}_{\ur{c}}$ is decoded treating the contributions of $\bo{s}_{k'},k'=1,\dots,k,$ as noise\footnote{The common symbol vector must also be decoded when $\eta_{c,k} = 0.$}. Following successful decoding\footnote{Decoding is assumed to be always successful as the symbols are transmitted at or below their achievable rates. In practice, powerful codes that can closely approach these achievable rates can be utilized.} of $\bo{s}_{\ur{c}},$ its contribution is eliminated, resulting in the signal:
	\begin{align}
		\bo{y}_k' &= \bo{y}_k -  \frac{1}{\sqrt{\mathstrut L_k}}\bo{H}_k\bo{P}_{\ur{c}}\bo{s}_{\ur{c}} 
		\nonumber\\&{}=  \frac{1}{\sqrt{\mathstrut L_k}} \bo{H}_k \bo{\Psi}_{k} \bo{X}_{k}^\frac{1}{2}\bo{s}_{k} + \frac{1}{\sqrt{\mathstrut L_k}} \bo{H}_k \sum_{k'=1}^{k-1} \bo{\Psi}_{k'} \bo{X}_{k'}^\frac{1}{2}\bo{s}_{k'} + \bo{z}_k. \label{eqn:y_r1}
	\end{align}
		
	Next, exploiting $\bo{y}_k',$ symbol vector $\bo{s}_{k}$ is decoded treating $\bo{s}_{k'},k'=1,\dots,k-1,$ as noise. The proposed SNS precoding and decoding strategies are schematically illustrated in Figure \ref{fig:decoding}.
	
	\subsection{Achievable Rate}
	Based on (\ref{eqn:y_r}) and (\ref{eqn:y_r1}), the achievable rate of user $k$ can be obtained as follows. The rate of private symbol vector $\bo{s}_{k}$ is given by
	\begin{align}
		R_{k} &= \log_2\det\Bigg(\bo{I}_{M_k} + \frac{1}{L_k}\bo{H}_k\bo{\Psi}_{k} \bo{X}_{k} \bo{\Psi}_{k}^\H\bo{H}_k^\H \nonumber\\&\quad\times\Big[\sigma^2\bo{I}_{M_k} + \frac{1}{L_k}\bo{H}_k \Big(\sum_{k'=1}^{k-1}\bo{\Psi}_{k'} \bo{X}_{k'} \bo{\Psi}_{k'}^\H\Big)\bo{H}_k^\H\Big]^{-1}\Bigg). \label{eqn:rk2}
	\end{align}
	Next, for the common symbol vector, $\bo{s}_\ur{c},$ the achievable rate of user $k$ is:
	\begin{align}
		R_{k,\ur{c}} &= \log_2\det\Bigg(\bo{I}_{M_k} + \frac{1}{L_k}\bo{H}_{k}\bo{Q}_{\ur{c}}\bo{H}_{k}^\H \nonumber\\&\quad\times\Big[\sigma^2\bo{I}_{M_{k}} + \frac{1}{L_k}\bo{H}_{k} \Big(\sum_{k'=1}^{k}\bo{\Psi}_{k'} \bo{X}_{k'} \bo{\Psi}_{k'}^\H\Big)\bo{H}_{k}^\H\Big]^{-1}\Bigg), \label{eqn:rkc2}
	\end{align}
	where $\bo{Q}_{\ur{c}} = \bo{P}_{\ur{c}} \bo{P}_{\ur{c}}^\H.$ Since the CM symbol vector must be decodeable at all users, its rate is chosen as $R_{\ur{c}} = \bmin{R_{k,\ur{c}}, k=1,\dots,K}.$
				
	\section{WSR Maximization With Perfect CSI Knowledge at the BS}
	\label{sec:mwsr}
	In this section, we formulate the WSR optimization problem for perfect CSI knowledge at the BS, i.e., we assume that the BS knows all matrices $\bo{H}_k,k=1,\dots,K,$ and the scalar $L_k$ perfectly. In addition, we assume that the users know their own MIMO channel matrices perfectly. Since the formulated optimization problem is highly non-convex, we characterize the achievable WSR through a feasible LB obtained via SCA. Lastly, we present a low-complexity alternative to searching over all permutations of user indices. We note that the changes required to accommodate imperfect CSI at the BS are detailed in Section \ref{sec:mwsri}.
		
	\subsection{Weighted Sum Rate}
	\label{sec:wsr}
	
	Let $0 \leq \eta_k \leq 1,k=1,\dots,K,$ $\sum_{k=1}^{K}\eta_k = 1,$ denote fixed weights which can be chosen to adjust the rates of the users during PA \cite[Sec. 4]{WangGiannakis2011}. Furthermore, as described earlier, $\eta_{\ur{c},k}, k=1,\dots,K,$ denote the fraction of the available bits in the CM assigned to user $k.$ Then, the WSR is given by
	\begin{align}
		R_\mathrm{wsr} = \sum_{k=1}^{K} \eta_k\eta_{\ur{c},k}  R_{\ur{c}} + \sum_{k = 1}^{K} \eta_k R_{k}. \label{eqn:wsr}
	\end{align}
	
	\subsection{Problem Formulation}
	\label{sec:probform}
	
	Based on (\ref{eqn:wsr}), the maximum WSR is the solution $R_\mathrm{wsr}^\star$ of the following optimization problem:
	\begin{maxi!}
		{\substack{\bo{Q}_{\ur{c}} \succcurlyeq \bo{0},\\\bo{X}_{k} \succcurlyeq \bo{0},\forall\,k}}
		{R_\mathrm{wsr}}
		{\label{opt:wsr}}
		{R_\mathrm{wsr}^\star =}
		\addConstraint{\text{C1: } \btr{\bo{Q}_{\ur{c}}} + \sum_{k = 1}^{K} \btr{\bo{X}_{k}}\leq P_{\T}\label{cons:t1}}{}{}
		\addConstraint{\text{C2: } \rank{\bo{X}_{k}}\leq M_k,\quad\forall\,k\label{cons:rank2}}{}{}
		\addConstraint{\text{C3: } \rank{\bo{Q}_{\ur{c}}}\leq M,\label{cons:rankc}}{}{}
	\end{maxi!}
	where the WSR is maximized for all permutations of user labels. In the following, we compare $R_\mathrm{wsr}^\star$ in (\ref{opt:wsr}) with the corresponding optima for ZF and BD precoding as well as BD-based MIMO-RSMA.
	
	\begin{remark}
	Based on (\ref{eqn:wsr}), for $\eta_k = \frac{1}{K}, k=1,\dots,K,$ the WSR maximization in (\ref{opt:wsr}) reduces to sum rate (SR) maximization. Hence, WSR maximization is a generalization of SR maximization. Furthermore, the weights $\eta_k, k=1,\dots,K,$ can also be chosen to incorporate quality of service constraints, such as fairness constraints, in the optimization problem, see \cite{WangGiannakis2011} for details.\end{remark}
		
	\subsection{Performance Comparison with ZF and BD Precoding and BD-based MIMO-RSMA}
	\label{sec:bdcomp}
	
	\begin{proposition}
		\label{prop:bdcomp}
		Let $R_\mathrm{wsr}^{\textrm{BD}\star},$ $R_\mathrm{wsr}^{\textrm{BD+MIMO-CM}\star},$ and $R_\mathrm{wsr}^{\textrm{ZF}\star}$ denote the optimal WSRs for BD precoding, BD-based MIMO-RSMA with MIMO CM, and ZF precoding, respectively, analogous to $R_\mathrm{wsr}^\star$ in (\ref{opt:wsr}). Then, we have $R_\mathrm{wsr}^\star \geq R_\mathrm{wsr}^{\textrm{BD+MIMO-CM}\star} \geq R_\mathrm{wsr}^{\textrm{BD}\star}$ and $R_\mathrm{wsr}^\star \geq R_\mathrm{wsr}^{\textrm{ZF}\star}.$
	\end{proposition}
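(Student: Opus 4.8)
The plan is to prove all three inequalities by a \emph{feasibility-embedding} argument: for each baseline scheme I would exhibit a choice of the SNS optimization variables $(\bo{Q}_{\ur{c}}, \bo{X}_1, \dots, \bo{X}_K)$ that is feasible for problem (\ref{opt:wsr}) and reproduces exactly the received signals, and hence the rates, of the baseline. Since (\ref{opt:wsr}) maximizes $R_\mathrm{wsr}$ over a set that then contains a point attaining the baseline WSR, the optimal SNS value can only be larger, which yields each inequality. The central technical device is a subspace-containment observation combined with the interference-cancellation structure of the successive decoder.

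For $R_\mathrm{wsr}^\star \geq R_\mathrm{wsr}^{\textrm{BD+MIMO-CM}\star}$, I would first note that $\bo{F}_k$ in (\ref{eqn:augh}) is obtained by deleting the rows $\bo{H}_{k+1}^\T, \dots, \bo{H}_K^\T$ from $\bo{F}_k^{\textrm{BD}}$; consequently $\bnullspace{\bo{F}_k^{\textrm{BD}}} \subseteq \bnullspace{\bo{F}_k}$, so that $\bcolspace{\bo{\Psi}_k^{\textrm{BD}}} \subseteq \bcolspace{\bo{\Psi}_k}$ and there is an $\bo{A}_k = \bo{\Psi}_k^\H \bo{\Psi}_k^{\textrm{BD}}$ with orthonormal columns satisfying $\bo{\Psi}_k^{\textrm{BD}} = \bo{\Psi}_k \bo{A}_k.$ Given any BD-based MIMO-RSMA solution, I would keep $\bo{Q}_{\ur{c}} = \bo{P}_{\ur{c}} \bo{P}_{\ur{c}}^\H$ and set $\bo{X}_k^\frac{1}{2} = \bo{A}_k \bo{V}_k^{\textrm{BD}} (\bo{D}_k^{\textrm{BD}})^\frac{1}{2},$ so that $\bo{\Psi}_k \bo{X}_k^\frac{1}{2} = \bo{\Psi}_k^{\textrm{BD}} \bo{V}_k^{\textrm{BD}} (\bo{D}_k^{\textrm{BD}})^\frac{1}{2} = \bo{P}_k.$ Feasibility is immediate: (\ref{eqn:trsimp}) gives $\btr{\bo{X}_k} = \btr{\bo{P}_k \bo{P}_k^\H},$ so C1 matches the BD power budget; $\bo{X}_k^\frac{1}{2}$ has $M_k$ columns, giving C2; and $\bo{P}_{\ur{c}} \in \mathbb{C}^{N \times M}$ gives C3. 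The rates then coincide because $\bo{H}_k \bo{\Psi}_{k'}^{\textrm{BD}} = \bo{0}$ for every $k' \neq k,$ and in particular for $k' < k$; hence the interference sum $\bo{H}_k (\sum_{k'=1}^{k-1} \bo{\Psi}_{k'} \bo{X}_{k'} \bo{\Psi}_{k'}^\H) \bo{H}_k^\H$ in (\ref{eqn:rk2}) and in (\ref{eqn:rkc2}) vanishes, so the successive-decoding expressions reduce exactly to the interference-free BD rates.

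The remaining two inequalities follow the same template. For $R_\mathrm{wsr}^{\textrm{BD+MIMO-CM}\star} \geq R_\mathrm{wsr}^{\textrm{BD}\star}$, BD precoding is precisely the restriction of BD-based MIMO-RSMA to $\bo{P}_{\ur{c}} = \bo{0}$; the restricted feasible set is contained in the full one and yields $R_{\ur{c}} = 0$ with identical private rates, so the inequality is immediate. For $R_\mathrm{wsr}^\star \geq R_\mathrm{wsr}^{\textrm{ZF}\star}$, the defining ZF property $\bo{H}_j \bo{P}_{k'}^{\textrm{ZF}} = \bo{0}$ for $j \neq k'$ implies, taking $j = 1, \dots, k'-1,$ that $\bo{F}_{k'} \bo{P}_{k'}^{\textrm{ZF}} = \bo{0},$ i.e., $\bcolspace{\bo{P}_{k'}^{\textrm{ZF}}} \subseteq \bcolspace{\bo{\Psi}_{k'}}.$ I would then set $\bo{Q}_{\ur{c}} = \bo{0}$ and $\bo{X}_{k'}^\frac{1}{2} = \bo{\Psi}_{k'}^\H \bo{P}_{k'}^{\textrm{ZF}},$ which reproduces $\bo{P}_{k'}^{\textrm{ZF}}$ and satisfies C1--C3 as above; since $\bo{H}_k \bo{P}_{k'}^{\textrm{ZF}} = \bo{0}$ for $k' < k,$ the interference term in (\ref{eqn:rk2}) again vanishes and the SNS private rate collapses to the diagonal ZF rate.

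The routine parts are the constraint checks, which all reduce to (\ref{eqn:trsimp}) and to counting columns. The step I expect to require the most care is verifying that the residual IUI in the successive-decoding rate formulas (\ref{eqn:rk2}) and (\ref{eqn:rkc2}) genuinely vanishes for the embedded baseline precoders, i.e., that the successive structure of the SNS receiver does not penalize a BD- or ZF-type precoder relative to its own interference-free receiver. This hinges entirely on the nulling identities $\bo{H}_k \bo{\Psi}_{k'}^{\textrm{BD}} = \bo{0}$ and $\bo{H}_k \bo{P}_{k'}^{\textrm{ZF}} = \bo{0}$ for $k' < k,$ after which the WSR values match term by term and the optimality of $R_\mathrm{wsr}^\star$ over the enlarged feasible set delivers the claimed ordering.
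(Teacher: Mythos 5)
Your proposal is correct and follows essentially the same route as the paper: both arguments show that each baseline's precoders can be realized inside the SNS feasible set of (\ref{opt:wsr}) — using the fact that the BD/ZF precoding subspaces are contained in the SNS null spaces, so the interference terms in (\ref{eqn:rk2}) and (\ref{eqn:rkc2}) vanish and the rates coincide — whence the SNS optimum dominates. The only executional difference is that the paper fixes the basis $\bo{\Psi}_k = [\,\bo{\Psi}_k^{\textrm{BD}}\;\;\bo{\Psi}_k^{\textrm{C}}\,]$ and block-partitions $\bo{X}_k$, and for ZF invokes a Schur-complement identity to place $\bo{P}_k^{\textrm{ZF}}$ in the BD subspace, whereas you work basis-free via $\bo{A}_k = \bo{\Psi}_k^\H\bo{\Psi}_k^{\textrm{BD}}$ and obtain $\bo{F}_{k'}\bo{P}_{k'}^{\textrm{ZF}} = \bo{0}$ directly from the zero-forcing property; the underlying mechanism is identical.
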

	\begin{proof}
		Please refer to Appendix \ref{app:bdcomp}.
	\end{proof}
	
	\subsection{A Feasible Lower Bound}
	\label{sec:sca}
	\label{sec:ref}
	Solving (\ref{opt:wsr}) entails a very high computational complexity due to the non-convex objective function and the rank constraints in (\ref{cons:rank2}) and (\ref{cons:rankc}). Hence, in the following, we characterize $R_\mathrm{wsr}^\star$ via a feasible LB. To this end, we relax (\ref{opt:wsr}) by eliminating the rank constraints (\ref{cons:rank2}) and (\ref{cons:rankc}) to obtain a relaxed optimization problem \cite{Luo2006} as follows:
	\begin{maxi}
		{\substack{\bo{Q}_{\ur{c}} \succcurlyeq \bo{0},\\\bo{X}_{k} \succcurlyeq \bo{0},\forall\,k}}
		{R_\mathrm{wsr} \quad \text{subject to C1.}}
		{\label{opt:wsr-}}
		{}
	\end{maxi}
		
	However, relaxed optimization problem (\ref{opt:wsr-}) is still non-convex due to the non-convex objective function $R_\mathrm{wsr}.$ Nevertheless, a locally optimal solution of (\ref{opt:wsr-}) can now be obtained via SCA \cite{Razaviyayn2014} based on a first-order approximation of the non-convex terms in $R_\mathrm{wsr}.$ However, $R_\mathrm{wsr}$ is a function of matrix-valued variables. Hence, unlike \cite{Zhou2020, Joudeh2016}, where SCA is utilized for solving optimization problems with scalar or vector variables, solving (\ref{opt:wsr-}) necessitates deriving first-order approximations for scalar functions of positive semi-definite matrix variables. The SCA procedure and the corresponding first-order approximations for solving (\ref{opt:wsr-}) are described in detail in the following. However, as a solution of (\ref{opt:wsr-}) may not satisfy rank constraints  (\ref{cons:rank2}) and (\ref{cons:rankc}), it may not be a feasible solution of (\ref{opt:wsr}). Hence, \emph{based on the obtained locally optimal solution,} we reformulate (\ref{opt:wsr-}) using new optimization variables which satisfy the rank constraints in (\ref{cons:rank2}) and (\ref{cons:rankc}) by construction. The resulting solution for the new optimization problem is a feasible solution for (\ref{opt:wsr}).
	
	\subsubsection{Successive Convex Approximation}
	\label{sec:sca2}
	\begin{figure}
		\begin{algorithm}[H]
			\small
			\begin{algorithmic}[1]
				\STATE {Initialize $\breve{\bo{X}}_{k}^{(0)} = \frac{P_\T}{K N_k}\bo{I}_{N_k},\forall\,k,$ $\tilde{R}_\mathrm{wsr}^{\star(0)} = -\infty,$ numerical tolerance $\epsilon,$ and iteration index $l=0.$}
				\REPEAT	
				\STATE {$l \leftarrow l + 1$}
				\STATE {Update the first-order approximations in (\ref{eqn:tr12}) and (\ref{eqn:trc}) based on $\breve{\bo{X}}_{k}^{(l-1)}.$}
				\STATE {Solve convex optimization problem (\ref{opt:wsrlo}) to obtain the optimal value $\tilde{R}_\mathrm{wsr}^{\star(l)}$ and solution $\bo{Q}_{\ur{c}}^\star, \bo{X}_{k}^\star,\forall\,k.$}
				\STATE {Set $\breve{\bo{X}}_{k}^{(l)} = \bo{X}_{k}^\star,\forall\,k.$}
				\UNTIL {$|\tilde{R}_\mathrm{wsr}^{\star(l)} - \tilde{R}_\mathrm{wsr}^{\star(l-1)}| < \epsilon$}
				\STATE{Return $\tilde{R}_\mathrm{wsr}^\star = \tilde{R}_\mathrm{wsr}^{\star(l)}$ and $\bo{Q}_{\ur{c}}^\star, \bo{X}_{k}^\star,\forall\,k,$ as the optimal value and the corresponding solution.}
			\end{algorithmic}
			\caption{Algorithm for solving (\ref{opt:wsr-}) via SCA.}
			\label{alg:wsrlo}
		\end{algorithm}
	\end{figure}

	For solving (\ref{opt:wsr-}) via SCA, an inner convex optimization problem based on a first-order approximation of the non-convex objective function $R_{\mathrm{wsr}}$ is constructed. This inner optimization problem is solved repeatedly until convergence, up to a numerical tolerance $\epsilon.$ In each iteration, based on the obtained optimal solution, the first-order approximation is updated and used as the objective function for the next iteration. The procedure is described in detail in the following.
	
	In iteration $l=1,2,\dots,$ a convex approximation of the objective function $R_{\mathrm{wsr}},$ denoted by $\tilde{R}_{\mathrm{wsr}},$ is constructed based on a first-order approximation around \emph{given points} $\breve{\bo{X}}_{k}^{(l-1)} \in \mathbb{C}^{N_k\times N_k}, k=1,\dots,K,$ with initial values $\breve{\bo{X}}_{k}^{(0)} = \frac{P_\T}{K N_k}\bo{I}_{N_k}, k=1,\dots,K,$ as follows:
	\begin{align}
		\tilde{R}_\mathrm{wsr} = \sum_{k=1}^{K} \eta_k\eta_{\ur{c},k} \tilde{R}_{\ur{c}} + \sum_{k = 1}^{K} \eta_k \tilde{R}_{k},
	\end{align}
	where $\tilde{R}_{\ur{c}} = \bmin{\tilde{R}_{k,\ur{c}}, k=1,\dots,K},$ and $\tilde{R}_{k,\ur{c}}$ and $\tilde{R}_{k},k=1,\dots,K,$ are given in the following proposition.
	
	\begin{proposition}
		\label{prop:fo}
		Let $h > 0$ and let matrix $\bo{V}_k \in \mathbb{C}^{N_k\times N_k}$ be an arbitrary matrix. Then, the first-order approximations of $R_{k,\ur{c}}$ and $R_k$ at points 
		\begin{align}
			\bo{X}_{k} = \breve{\bo{X}}_{k}^{(l-1)} + h\bo{V}_k, k=1,\dots,K,
		\end{align}
		along direction $\bo{V}_k$ and with sufficiently small $h$ such that $\bo{X}_{k} \succcurlyeq \bo{0}$ is in the neighborhood of $\breve{\bo{X}}_{k}^{(l-1)}$ are given in (\ref{eqn:tr12}) and (\ref{eqn:trc}) on top of the next page,
	\begin{figure*}
	\begin{align}
		\tilde{R}_{k} &= \bloggdet{\bo{I}_{M_k} + \frac{1}{L_k\sigma^2}\bo{H}_k\bo{Q}_{k}\bo{H}_k^\H + \frac{1}{L_k\sigma^2}\bo{H}_k \Big(\sum_{k'=1}^{k-1}\bo{Q}_{k'}\Big)\bo{H}_k^\H} - \bloggdet{\bo{I}_{M_k} + \frac{1}{L_k\sigma^2}\bo{H}_k \breve{\bo{Q}}_k^{(l-1)} \bo{H}_k^\H} \nonumber\\&\quad{}- \frac{1}{L_k\sigma^2\log_{\mathrm{e}}(2)}\sum_{k'=1}^{k-1}\btr{\bo{\Psi}_{k'}^\H \bo{H}_k^\H \Big[\bo{I}_{M_k} + \frac{1}{L_k\sigma^2}\bo{H}_k \breve{\bo{Q}}_k^{(l-1)} \bo{H}_k^\H\Big]^{-1}\bo{H}_k \bo{\Psi}_{k'}\Big(\bo{X}_{k'}-\breve{\bo{X}}_{k'}^{(l-1)}\Big)}, \label{eqn:tr12} \\
		\tilde{R}_{k,\ur{c}} &= \bloggdet{\bo{I}_{M_k} + \frac{1}{L_k\sigma^2}\bo{H}_{k}\bo{Q}_{\ur{c}}\bo{H}_{k}^\H + \frac{1}{L_k\sigma^2}\bo{H}_{k} \Big(\sum_{k'=1}^{k}\bo{Q}_{k'}\Big)\bo{H}_{k}^\H}
		-\bloggdet{\bo{I}_{M_k} + \frac{1}{L_k\sigma^2}\bo{H}_{k} \breve{\bo{Q}}_{k+1}^{(l-1)} \bo{H}_{k}^\H} \nonumber\\&\quad{} - \frac{1}{L_k\sigma^2\log_{\mathrm{e}}(2)}\sum_{k'=1}^{k-1}\btr{\bo{\Psi}_{k'}^\H \bo{H}_k^\H \Big[\bo{I}_{M_k} + \frac{1}{L_k\sigma^2}\bo{H}_k \breve{\bo{Q}}_{k+1}^{(l-1)} \bo{H}_k^\H\Big]^{-1} \bo{H}_k \bo{\Psi}_{k'}\Big(\bo{X}_{k'}-\breve{\bo{X}}_{k'}^{(l-1)}\Big)} \label{eqn:trc}
	\end{align}
	\end{figure*}
	where $\bo{Q}_k = \bo{\Psi}_{k} \bo{X}_{k} \bo{\Psi}_{k}^\H$ and $\breve{\bo{Q}}_{k}^{(l-1)} = \sum_{k'=1}^{k-1}\bo{\Psi}_{k'} \breve{\bo{X}}_{k'}^{(l-1)} \bo{\Psi}_{k'}^\H,$ for $k = 1,\dots,K.$
	\end{proposition}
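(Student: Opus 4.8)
The plan is to expose a difference-of-concave-functions (DC) structure in the rate expressions (\ref{eqn:rk2}) and (\ref{eqn:rkc2}) and then form the standard SCA surrogate by linearizing only the subtracted concave term about the current iterate. First I would apply the identity $\bdet{\bo{I}+\bo{A}\bo{B}^{-1}}=\bdet{\bo{A}+\bo{B}}\,\bdet{\bo{B}}^{-1}$, which holds for $\bo{B}\succ\bo{0}$, to split each achievable rate into a difference of two log-det terms. Introducing $\bo{Q}_{k'}=\bo{\Psi}_{k'}\bo{X}_{k'}\bo{\Psi}_{k'}^\H$ and pulling the factor $\sigma^2$ out of each determinant — the resulting additive constants $M_k\blog{\sigma^2}$ cancel between the minuend and the subtrahend — yields, for the private rate, $R_k=\bloggdet{\bo{I}_{M_k}+\frac{1}{L_k\sigma^2}\bo{H}_k(\sum_{k'=1}^{k}\bo{Q}_{k'})\bo{H}_k^\H}-\bloggdet{\bo{I}_{M_k}+\frac{1}{L_k\sigma^2}\bo{H}_k(\sum_{k'=1}^{k-1}\bo{Q}_{k'})\bo{H}_k^\H}$, and analogously for $R_{k,\ur{c}}$, where the common covariance $\bo{Q}_{\ur{c}}$ appears only in the first determinant and the private-stream interference sum runs up to $k'=k$ in both determinants.

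Next I would argue that each of the two log-det terms is concave on the cone of positive semi-definite variables: $\blogdet{\cdot}$ is concave on the positive-definite cone, and its matrix argument is an affine (indeed linear-plus-identity) function of the $\bo{X}_{k'}$ because $\bo{Q}_{k'}$ is linear in $\bo{X}_{k'}$. Hence $R_k$ and $R_{k,\ur{c}}$ are differences of concave functions. The SCA surrogate then retains the first (concave) term verbatim and replaces the second (subtracted) concave term by its first-order Taylor expansion around the given point $\{\breve{\bo{X}}_{k'}^{(l-1)}\}$. Because a concave function lies below each of its tangents, this surrogate is a global concave minorant of the true rate whose value and gradient coincide with those of the rate at the current iterate — precisely the tangency property SCA requires, which also explains why only the subtracted term is linearized.

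The technical core, and the step I expect to be the main obstacle, is computing the first-order expansion of the subtracted log-det term with respect to the matrix variables $\bo{X}_{k'}$. I would use the matrix differential $\d\,\blogdet{\bo{A}}=\btr{\bo{A}^{-1}\,\d\bo{A}}$ with $\bo{A}=\bo{I}_{M_k}+\frac{1}{L_k\sigma^2}\bo{H}_k(\sum_{k'}\bo{Q}_{k'})\bo{H}_k^\H$, together with the chain rule through the map $\bo{X}_{k'}\mapsto\bo{Q}_{k'}=\bo{\Psi}_{k'}\bo{X}_{k'}\bo{\Psi}_{k'}^\H$, so that $\d\bo{A}=\frac{1}{L_k\sigma^2}\bo{H}_k\bo{\Psi}_{k'}(\d\bo{X}_{k'})\bo{\Psi}_{k'}^\H\bo{H}_k^\H$. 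Substituting the base point $\bo{A}=\bo{I}_{M_k}+\frac{1}{L_k\sigma^2}\bo{H}_k\breve{\bo{Q}}_k^{(l-1)}\bo{H}_k^\H$ into the inverse, converting the natural logarithm to base two via the factor $1/\log_{\mathrm{e}}(2)$, and invoking the cyclic property of the trace to bring $\bo{\Psi}_{k'}^\H\bo{H}_k^\H$ to the front produces exactly the trace-correction terms $\btr{\bo{\Psi}_{k'}^\H\bo{H}_k^\H[\,\cdot\,]^{-1}\bo{H}_k\bo{\Psi}_{k'}(\bo{X}_{k'}-\breve{\bo{X}}_{k'}^{(l-1)})}$ appearing in (\ref{eqn:tr12}) and (\ref{eqn:trc}). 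Summing these contributions over the variables on which the subtracted determinant depends, and restoring the $\log_2\det$ terms evaluated at $\breve{\bo{Q}}_k^{(l-1)}$ (respectively $\breve{\bo{Q}}_{k+1}^{(l-1)}$ for the common rate), assembles the claimed approximations. The main care required is the bookkeeping of the Hermitian conjugations and the trace cyclicity so that the directional derivative along $\bo{V}_k$ is rendered as a real-valued linear functional of $\bo{X}_{k'}-\breve{\bo{X}}_{k'}^{(l-1)}$, and the verification that for small $h>0$ the perturbation $\bo{X}_k=\breve{\bo{X}}_k^{(l-1)}+h\bo{V}_k\succcurlyeq\bo{0}$ keeps $\bo{A}\succ\bo{0}$, so that all inverses and logarithms remain well defined.
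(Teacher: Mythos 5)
Your proposal is correct and follows essentially the same route as the paper's proof: the paper also (implicitly) rewrites $R_{k}$ and $R_{k,\ur{c}}$ as differences of concave log-det terms and linearizes only the subtracted term around the current iterate, packaging the linearization as Lemma \ref{lem:fo}. The only technical difference is how the derivative formula is justified: the paper proves it from scratch by expanding the perturbed log-det through the eigenvalue series of $\big(\bo{I}_m+\bo{A}\bo{X}\bo{A}^\H\big)^{-1}\bo{A}\bo{E}\bo{A}^\H$ and then invoking a symmetrized Fr\'{e}chet derivative to account for the constraint that the variables are Hermitian positive semi-definite (a symmetrization which is a no-op because the derivative matrix is itself Hermitian), whereas you invoke the standard differential $\d\,\blogdet{\bo{A}}=\btr{\bo{A}^{-1}\d\bo{A}}$ with the chain rule; these are the same computation, and the Hermitian-constraint subtlety you dismiss as ``bookkeeping'' is precisely what the paper's citation of the symmetrized derivative handles.

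One caveat: carried out faithfully, your method does not reproduce (\ref{eqn:trc}) verbatim, although you assert it does. The subtracted determinant in the common-rate expression depends on $\bo{X}_{k'}$ for $k'=1,\dots,k$ (its base point $\breve{\bo{Q}}_{k+1}^{(l-1)}=\sum_{k'=1}^{k}\bo{\Psi}_{k'}\breve{\bo{X}}_{k'}^{(l-1)}\bo{\Psi}_{k'}^\H$ contains $\breve{\bo{X}}_{k}^{(l-1)}$), so your rule of ``summing these contributions over the variables on which the subtracted determinant depends'' forces the trace-correction sum in $\tilde{R}_{k,\ur{c}}$ to run over $k'=1,\dots,k$; as printed, (\ref{eqn:trc}) stops at $k'=k-1$, omitting the linear term in $\bo{X}_{k}-\breve{\bo{X}}_{k}^{(l-1)}$. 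Without that term the expression is not the first-order approximation of $R_{k,\ur{c}}$ in the variable $\bo{X}_k$: its gradient with respect to $\bo{X}_k$ does not match that of $R_{k,\ur{c}}$ at the base point, and the resulting surrogate is no longer guaranteed to minorize $R_{k,\ur{c}}$, which is exactly the tangency/minorization property you correctly identify as the reason for linearizing only the subtracted term. This is almost certainly an index typo in the paper (note that (\ref{eqn:tr12}), where the subtracted term genuinely depends only on $\bo{X}_{1},\dots,\bo{X}_{k-1}$, is consistent), but a complete proof along your lines should derive the sum up to $k'=k$ and flag the mismatch rather than claim exact agreement with the printed formula.
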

	\begin{proof}
		Please refer to Appendix \ref{app:fo}.
	\end{proof}
	
	Next, an inner convex optimization problem with $\tilde{R}_{\mathrm{wsr}}$ as the objective function is constructed as follows:
	\begin{maxi}
		{\substack{\bo{Q}_{\ur{c}} \succcurlyeq \bo{0},\\\bo{X}_{k} \succcurlyeq \bo{0},\forall\,k}}
		{\tilde{R}_\mathrm{wsr} \quad\text{subject to C1.}}
		{\label{opt:wsrlo}}
		{\tilde{R}_\mathrm{wsr}^\star =}
	\end{maxi}
	
	The inner convex optimization problem (\ref{opt:wsrlo}) is solved using standard convex optimization tools \cite{Boyd2004} to obtain the optimal value, $\tilde{R}_\mathrm{wsr}^{\star(l)},$ and the corresponding optimal solution $\bo{Q}_{\ur{c}}^\star,\bo{X}_{k}^\star,\forall\,k.$ The obtained solution is used as the starting point for the next iteration, i.e., $\breve{\bo{X}}_{k}^{(l)} = \bo{X}_{k}^\star.$ This process gradually tightens the first-order approximation of the objective function around a local optimum of (\ref{opt:wsr-}). Hence, the corresponding sequence of optimal values of (\ref{opt:wsrlo}), $\tilde{R}_\mathrm{wsr}^{\star(l)},l=1,2,\dots,$ converges to a local optimum of (\ref{opt:wsr-}) \cite{Razaviyayn2014}. The iterations are continued until convergence up to a numerical tolerance $\epsilon.$ The resulting algorithm is summarized in Algorithm \ref{alg:wsrlo}. Furthermore, similarly to (\ref{opt:wsr}), Algorithm  \ref{alg:wsrlo} is applied for all permutations of user labels, and the maximal  $\tilde{R}_\mathrm{wsr}^\star$ and the corresponding solution are exploited for reformulating (\ref{opt:wsr-}), as described below.
		
	\subsubsection{Problem Reformulation}
	\label{sec:reform}

	In the following, we reformulate (\ref{opt:wsr-}) with new optimization variables. Let $\bo{Q}_{\ur{c}}^\star,$ $\bo{X}_{k}^\star,k=1,\dots,K,$ denote the locally optimal solution of (\ref{opt:wsr-}) obtained with Algorithm \ref{alg:wsrlo}. We use this solution to define new matrix structures for $\bo{Q}_{\ur{c}}$ and $\bo{X}_{k},k=1,\dots,K,$ as follows:
	\begin{align}
		\bo{Q}_{\ur{c}} &= \bo{U}_{\ur{c}} \tilde{\bo{X}}_{\ur{c}} \bo{U}_{\ur{c}}^\H, \label{eqn:qkc'} \\
		\bo{X}_{k} &= \bo{U}_{k} \tilde{\bo{X}}_{k} \bo{U}_{k}^\H, \label{eqn:qk'}
	\end{align}
	where $\bo{U}_{\ur{c}} \in \mathbb{C}^{N\times M}$ and $\bo{U}_k \in \mathbb{C}^{N_k\times M_k}$ contain the eigenvectors of $\bo{Q}_{\ur{c}}^\star$ and $\bo{X}_{k}^\star,$ respectively, corresponding to their $M$ and $M_k$ largest eigenvalues, respectively. We follow the convention that the eigenvectors of a zero matrix are zero vectors. Here, $\tilde{\bo{X}}_{\ur{c}} \in \mathbb{C}^{M\times M}$ and $\tilde{\bo{X}}_{k} \in \mathbb{C}^{M_k\times M_k},\forall\,k,$ are symmetric, positive semi-definite matrices which form the new optimization variables.
	
	\begin{remark}
		The rank of $\bo{Q}_{\ur{c}}$ in (\ref{eqn:qkc'}) is limited to $M$ as $\bo{Q}_{\ur{c}}$ is the product of $N\times M,$ $M\times M,$ and $M\times N$ matrices $\bo{U}_{\ur{c}},$ $\tilde{\bo{X}}_{\ur{c}},$ and $\bo{U}_{\ur{c}}^\H,$ respectively. Similarly, the rank of $\bo{X}_{k}$ in (\ref{eqn:qk'}) is limited to $M_k.$
	\end{remark}
		
	Problem (\ref{opt:wsr-}) can now be reformulated in terms of the new matrices as follows:
	\begin{maxi!}
		{\substack{\tilde{\bo{X}}_{\ur{c}} \succcurlyeq \bo{0},\\\tilde{\bo{X}}_{k} \succcurlyeq \bo{0},\forall\,k}}
		{R_\mathrm{wsr}}
		{\label{opt:wsrlb}}
		{\mathrlap{R_\mathrm{wsrlb}^\star =}\nonumber\\}
		\addConstraint{\overline{\text{C1}}\text{: }\btr{\tilde{\bo{X}}_{\ur{c}}} + \sum_{k = 1}^{K} \btr{\tilde{\bo{X}}_{k}} \leq P_{\T},}{}{}
	\end{maxi!}
	where $\overline{\text{C1}}$ is obtained analogously to (\ref{eqn:trsimp}), and the WSR is maximized only for the chosen user index permutation. A locally optimal solution for (\ref{opt:wsrlb}) can be obtained via SCA, i.e., Algorithm \ref{alg:wsrlo} is applicable where the optimization variables in (\ref{opt:wsrlo}) and the first-order approximations in (\ref{eqn:tr12}) and (\ref{eqn:trc}) are revised to account for the new variables and matrix structures in (\ref{eqn:qkc'}) and (\ref{eqn:qk'}).
	
	\begin{remark}
		A locally optimal solution of (\ref{opt:wsrlb}) is a feasible LB for (\ref{opt:wsr}). The solution is feasible because the matrices in (\ref{eqn:qkc'}) and (\ref{eqn:qk'}) ensure rank constraints (\ref{cons:rank2}) and (\ref{cons:rankc}) by construction. However, the solution is suboptimal because the optimization variables in (\ref{opt:wsrlb}) have fewer DoFs compared to those in (\ref{opt:wsr}).
	\end{remark}

	\subsection{Fixed-Permutation Lower Bound}
	\label{sec:subopt}

	We note from the above that $K$ users necessitate a search over $K!$ permutations of user indices. This can be prohibitively complex when there are more than a few users. Hence, in the following, we propose a low-complexity scheme in which only one user permutation is considered. Nevertheless, as shown later in the simulation results in Section \ref{sec:sim}, the performance loss incurred is negligible.
	
	Let $R_k^{\ur{SU}}, k=1,\dots,K,$ denote the single user rates of users $1,\dots,K,$ respectively, given by
	\begin{maxi!}
		{\substack{\bo{P}_{k} \succcurlyeq \bo{0}}}
		{\bloggdet{\bo{I}_{M_k} + \frac{1}{\sigma^2}\bo{H}_k \bo{P}_k \bo{H}_k^\H}}
		{}
		{\mathrlap{R_k^{\ur{SU}} =}\nonumber\\}
		\addConstraint{\btr{\bo{P}_k\bo{P}_k^\H} \leq P_{\T}.}{}{}
	\end{maxi!}
	For the fixed-permutation LB, we optimize (\ref{opt:wsr}) only over the user index permutation $(k_1,k_2,\dots,k_K)$ which satisfies $\eta_{k_1} R_{k_1}^{\ur{SU}} \geq \dots \geq \eta_{k_K} R_{k_K}^{\ur{SU}}.$
	
	\begin{figure*}
		\centering
		\captionof{table}{Complexity of precoder and PA computation for the proposed SNS-based MIMO-RSMA scheme and several baseline schemes.}
		\label{tab:cc}
		\renewcommand{\arraystretch}{1}
		\resizebox{0.75\textwidth}{!}{
			\begin{tabular}{|l|l|l|l|}
				\hline
				\multicolumn{1}{|c|}{\multirow{3}{*}{Scheme}} & \multicolumn{3}{c|}{Complexity} \\\cline{2-4}
				& \multicolumn{1}{c|}{Precoder} & \multicolumn{2}{c|}{PA / Precoder} \\\cline{3-4}
				& &  \multicolumn{1}{c|}{Iterations}  &  \multicolumn{1}{c|}{Total}  \\\hline
				Proposed Scheme\footnotemark & $\bO{2 N^3 + KN\sqrt{N} \log(1/\epsilon)}$ & $N_\ur{iter}$ & $\bO{N_\ur{iter} V\sqrt{V} \log(1/\epsilon)}$ \\
				Direct SCA & $0$ & $N_\ur{iter}$ & $\bO{N_\ur{iter} N^3K\sqrt{K} \log(1/\epsilon)}$  \\
				BD-based MIMO-RSMA & $\bO{2 N^3 + \sum_{k=1}^{K}M_k^3}$ & $N_\ur{iter}$ & $\bO{N_\ur{iter} N^3 \log(1/\epsilon)}$ \\
				RZF & $\bO{\frac{3}{2} N^3}$ & $N_\ur{iter}$ & $\bO{N_\ur{iter} N\sqrt{N} \log(1/\epsilon)}$ \\
				BD & $\bO{2 N^3 + \sum_{k=1}^{K}M_k^3}$ & $0$ & $\bO{N\sqrt{N} \log(1/\epsilon)}$ \\
				ZF & $\bO{\frac{3}{2} N^3}$ & $0$ & $\bO{N\sqrt{N} \log(1/\epsilon)}$ \\\hline
		\end{tabular}}
	\end{figure*}
		
	\subsection{Computational Complexity}
	\label{sec:snscomp}
	In this section, we evaluate the complexity of computing the SNS precoders and compare it with the complexities of the baseline schemes. We assume a critically loaded system and consider the fixed-permutation LB described in the previous section. The resulting computational complexities are as follows.
				
	\begin{itemize}	
	\item For the proposed SNS-based MIMO-RSMA scheme, computing $\bo{\Psi}_k,k=1,\dots,K,$ via the QR decomposition entails a complexity of $\bO{2 N^3}$ \cite{Golub2012}. Furthermore, optimizing $\bo{P}_\ur{c}$ and $\bo{X}_k,k=1,\dots,K,$ which involves $\bO{V}$ real-valued optimization variables, $V = N^2 + \sum_{k=1}^{K} N_k^2,$ via SCA with $N_\ur{iter}$ iterations and numerical tolerance $\epsilon$ entails a complexity of $\bO{N_\ur{iter} V\sqrt{V} \log(1/\epsilon)}$ \cite{Wright1997,Serrano2015}. Moreover, computing the user index permutation (given in Section \ref{sec:subopt}) entails an additional complexity of $\bO{K N\sqrt{N} \log(1/\epsilon)}$ for solving $K$ convex optimization problems involving $N^2$ real-valued optimization variables, respectively.
		
	\item Directly optimizing the precoding matrices $\bo{P}_\ur{c}, \bo{P}_k, k=1,\dots,K,$ via SCA involves $\bO{KN^2}$ real-valued variables and entails an overall complexity of $\bO{N_\ur{iter} N^3K\sqrt{K} \log(1/\epsilon)}$ \cite{Wright1997,Serrano2015}.

	\item For BD-based MIMO-RSMA, computing the BD precoders via QR decomposition and SVD entails a complexity of $\bO{2 N^3 + \sum_{k=1}^{K}M_k^3}$ \cite{Golub2012}. Furthermore, PA via SCA involving $\bO{N^2}$ real-valued variables entails an additional complexity of $\bO{N_\ur{iter} N^3 \log(1/\epsilon)}$ \cite{Wright1997,Serrano2015}.	

	\item For RZF precoding, computing the precoder via matrix inversion and multiplication entails a complexity of $\bO{\frac{3}{2} N^3}$ \cite{Krishnamoorthy2013}, and PA via SCA involving $\bO{N}$ optimization variables entails a complexity of $\bO{N_\ur{iter} N\sqrt{N} \log(1/\epsilon)}$ \cite{Wright1997,Serrano2015}.
	
	\item For BD and ZF precoding, computing the precoders entail complexities of $\mathcal{O}\big(2 N^3$ $+ \sum_{k=1}^{K}M_k^3\big)$ \cite{Golub2012} and $\bO{\frac{3}{2} N^3}$ \cite{Krishnamoorthy2013}, respectively, and finding the corresponding PA via convex optimization entails a complexity of $\bO{N\sqrt{N} \log(1/\epsilon)}$ \cite{Wright1997,Serrano2015}.
	\end{itemize}

	The computational complexities of the different schemes are summarized in Table \ref{tab:cc}, shown on top of the next page. From the table, we note that direct optimization of the precoders entails the highest overall complexity, followed by (in order) the proposed SNS-based MIMO-RSMA, BD-based MIMO-RSMA, RZF precoding, BD precoding, and ZF precoding.
		
	\begin{remark}
		We note that although the proposed Algorithm \ref{alg:wsrlo} has a comparatively high computational complexity, it facilitates the performance evaluation of SNS-based MIMO-RSMA and provides a benchmark for the development of low-complexity algorithms, which might be developed in the future.
	\end{remark}
			
	\section{WSR Maximization With Imperfect CSI Knowledge at the BS}
	\label{sec:mwsri}
	In this section, we consider WSR maximization with imperfect CSI knowledge at the BS. We begin by describing the changes to SNS precoding and decoding provided in Section \ref{sec:pd}. Next, we study the robustness of SNS precoding via derivative-based SA. Lastly, we present the changes required to the WSR maximization algorithm proposed in Section \ref{sec:mwsr} if the CSI is imperfect.
	\footnotetext{For the proposed scheme, the precoder column includes the complexity incurred by the required user index permutation selection.}
	
	\subsection{Changes to Precoding and Decoding}
	As only the imperfect MIMO channel matrices of the users, $\bar{\bo{H}}_k,k=1,\dots,K,$ are available at the BS, the SNS precoders are chosen based on these imperfect matrices as $\bo{P}_k = \bar{\bo{\Psi}}_k \bo{X}_k.$ Here, for user $k,$ $\bar{\bo{\Psi}}_k \in \mathbb{C}^{N\times N_k}$ denotes the matrix whose columns contain the unit-length basis vectors of the null space of the augmented matrix $$\bar{\bo{F}}_k = \begin{bmatrix}
			\bar{\bo{H}}_{1}^\T & \bar{\bo{H}}_{2}^\T & \dots & \bar{\bo{H}}_{k-1}^\T
		\end{bmatrix}^\T\negthickspace$$ with the convention $\bar{\bo{\Psi}}_{1} = \bo{I}_N,$ analogous to the precoding scheme in Section \ref{sec:precoding}. However, since the precoders are computed based on $\bar{\bo{H}}_k,$ the orthogonality properties with respect to $\bo{H}_k$ described in Section \ref{sec:precoding} may not hold. Hence, in this case, the signal received at user $k$ suffers from interference by all other users $k'\neq k,k'=1,\dots,K,$ i.e., $\bo{y}_k$ in (\ref{eqn:y_r}) is replaced by
	\begin{align}
		\bo{y}_k &= \frac{1}{\sqrt{\mathstrut L_k}} \bo{H}_k\bo{P}_{\ur{c}}\bo{s}_{\ur{c}} + \frac{1}{\sqrt{\mathstrut L_k}} \bo{H}_k \sum_{k'=1}^{K} \bar{\bo{\Psi}}_{k'} \bo{X}_{k'}^\frac{1}{2}\bo{s}_{k'} + \bo{z}_k. \label{eqn:y_ri}
	\end{align}
	Decoding at user $k$ is carried out as described in Section \ref{sec:decoding}.
	
	\subsection{IUI due to Imperfect CSI at the BS}
	\label{sec:iuiimp}

	For given $\bo{X}_k,k=1,\dots,K,$ let
	\begin{align}
		\bo{\Xi}^\uparrow_k &= \frac{1}{\sqrt{\mathstrut L_k}} \bo{H}_k \sum_{k'=k+1}^{K} \bar{\bo{\Psi}}_{k'} \bo{X}_{k'}^\frac{1}{2}\bo{s}_{k'} \label{eqn:i_k}
	\end{align}
	denote the additional IUI from users $k+1,\dots,K$ due to the imperfect estimate of $\bo{H}_k$ and let
	\begin{align}
		\bo{\Xi}^\downarrow_k &{}= \frac{1}{\sqrt{\mathstrut L_k}} \bo{H}_k \sum_{k'=1}^{k-1} (\bar{\bo{\Psi}}_{k'} - \bo{\Psi}_{k'}) \bo{X}_{k'}^\frac{1}{2}\bo{s}_{k'} \nonumber\\&{}= \frac{1}{\sqrt{\mathstrut L_k}} \bo{H}_k \sum_{k'=1}^{k-1} \Delta \bo{\Psi}_{k'} \bo{X}_{k'}^\frac{1}{2}\bo{s}_{k'}, \label{eqn:j_k}
	\end{align}
	where $\Delta \bo{\Psi}_{k} = \bar{\bo{\Psi}}_{k} - \bo{\Psi}_{k},$ denote the additional IUI from users $1,\dots,k-1$ due the imperfect estimates of $\bo{H}_{k},k=1,\dots,k-1.$ Both $\bo{\Xi}^\uparrow_k$ and $\bo{\Xi}^\downarrow_k,$ which are present in (\ref{eqn:y_ri}) but absent in (\ref{eqn:y_r}), are characterized in the following proposition.
		
	\begin{proposition}
		\label{prop:ikjk}
		Matrices $\bar{\bo{\Psi}}_{k}, \bo{\Psi}_{k},k=1,\dots,K,$ can be chosen such that:
		\begin{align}
			\|\bo{\Xi}^\uparrow_k\| &\leq  \frac{1}{\sqrt{\mathstrut L_k}} \|\Delta\bo{H}_k\|  \sum_{k'=k+1}^{K} \|\bo{X}_{k'}^\frac{1}{2}\bo{s}_{k'}\|, \label{eqn:normik} \\
			\|\bo{\Xi}^\downarrow_k\| &\leq  \frac{1}{\sqrt{\mathstrut L_k}} \|\bo{H}_k\| \nonumber\\&\quad\times\sum_{k'=1}^{k-1} \Big[\|(\bar{\bo{F}}_{k'})^+\| \|\Delta\bo{F}_{k'}\| - \log(1-\|\bo{C}_{k'}\|)\Big] \nonumber\\&\quad\times \|\bo{X}_{k'}^\frac{1}{2}\bo{s}_{k'}\|, \label{eqn:normjk}
		\end{align}
	where $\Delta\bo{F}_{k} = \bar{\bo{F}}_{k} - \bo{F}_{k}$ and $\bo{C}_{k} = \bo{\Psi}_{k}^\H(\Delta\bo{F}_{k})^\H\big(\bar{\bo{F}}_{k}\bar{\bo{F}}_{k}^\H\big)^{-1}\Delta\bo{F}_{k}\bo{\Psi}_{k}.$
	\end{proposition}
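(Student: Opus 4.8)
The plan is to bound the two interference terms separately, exploiting that $\bo{\Xi}^\uparrow_k$ and $\bo{\Xi}^\downarrow_k$ have structurally different origins.

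For (\ref{eqn:normik}), the key observation is that for every $k' \geq k+1$ the estimated channel $\bar{\bo{H}}_k$ is one of the block rows of $\bar{\bo{F}}_{k'}$ (because $k \leq k'-1$); hence $\bar{\bo{H}}_k \bar{\bo{\Psi}}_{k'} = \bo{0}$ by the very definition of $\bar{\bo{\Psi}}_{k'}$ as a null-space basis of $\bar{\bo{F}}_{k'}.$ Substituting $\bo{H}_k = \bar{\bo{H}}_k - \Delta\bo{H}_k$ from (\ref{eqn:hkd}) then gives $\bo{H}_k \bar{\bo{\Psi}}_{k'} = -\Delta\bo{H}_k \bar{\bo{\Psi}}_{k'},$ so each summand in (\ref{eqn:i_k}) is driven solely by the estimation error. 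Applying the triangle inequality and submultiplicativity of the induced norm, together with $\bnorm{\bar{\bo{\Psi}}_{k'}} = 1$ (orthonormal columns), yields (\ref{eqn:normik}). This part is immediate and holds for any admissible choice of $\bar{\bo{\Psi}}_{k'}.$

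The bound (\ref{eqn:normjk}) is the technical core. By the triangle inequality and submultiplicativity applied to (\ref{eqn:j_k}), it reduces to bounding the basis perturbation $\bnorm{\Delta\bo{\Psi}_{k'}} = \bnorm{\bar{\bo{\Psi}}_{k'} - \bo{\Psi}_{k'}}$ for each $k' \leq k-1$ by the bracketed term. To obtain a constructive bound I would fix $\bo{\Psi}_{k'}$ as any orthonormal null-space basis of $\bo{F}_{k'}$ and then \emph{build} a compatible $\bar{\bo{\Psi}}_{k'}$ in two steps: first project $\bo{\Psi}_{k'}$ onto $\bnullspace{\bar{\bo{F}}_{k'}},$ then re-orthonormalize, i.e.,
\begin{align}
	\bo{\Psi}_{k'}' = \bo{\Psi}_{k'} - (\bar{\bo{F}}_{k'})^+ \Delta\bo{F}_{k'} \bo{\Psi}_{k'}, \qquad \bar{\bo{\Psi}}_{k'} = \bo{\Psi}_{k'}' \big(\bo{I}_{N_{k'}} - \bo{C}_{k'}\big)^{-\frac{1}{2}},
\end{align}
where the projection step uses $\bar{\bo{F}}_{k'} \bo{\Psi}_{k'} = \Delta\bo{F}_{k'}\bo{\Psi}_{k'}$ (since $\bo{F}_{k'}\bo{\Psi}_{k'} = \bo{0}$). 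I would first verify that this construction is legitimate: a short computation gives $(\bo{\Psi}_{k'}')^\H \bo{\Psi}_{k'}' = \bo{I}_{N_{k'}} - \bo{C}_{k'},$ so that $\bar{\bo{\Psi}}_{k'}$ indeed has orthonormal columns and, having full column rank, spans $\bnullspace{\bar{\bo{F}}_{k'}};$ this requires $\bnorm{\bo{C}_{k'}} < 1,$ which is ensured in the small-error regime $\bnorm{\Delta\bo{H}_k} < \bnorm{\bar{\bo{H}}_k}.$

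With this construction I would split $\Delta\bo{\Psi}_{k'} = (\bo{\Psi}_{k'}' - \bo{\Psi}_{k'}) + (\bar{\bo{\Psi}}_{k'} - \bo{\Psi}_{k'}').$ The first difference equals $-(\bar{\bo{F}}_{k'})^+ \Delta\bo{F}_{k'} \bo{\Psi}_{k'}$ and, using $\bnorm{\bo{\Psi}_{k'}} = 1,$ contributes the first-order term $\bnorm{(\bar{\bo{F}}_{k'})^+}\bnorm{\Delta\bo{F}_{k'}}.$ The second difference equals $\bo{\Psi}_{k'}'\big[(\bo{I}_{N_{k'}} - \bo{C}_{k'})^{-\frac{1}{2}} - \bo{I}_{N_{k'}}\big];$ since $\bnorm{\bo{\Psi}_{k'}'} \leq 1$ (an orthogonal projection of $\bo{\Psi}_{k'}$) and the eigenvalues of the Hermitian matrix $\bo{C}_{k'}$ lie in $[0, \bnorm{\bo{C}_{k'}}],$ its norm is at most $(1 - \bnorm{\bo{C}_{k'}})^{-\frac{1}{2}} - 1.$ The main obstacle is the final scalar estimate $(1-c)^{-\frac{1}{2}} - 1 \leq -\log(1-c),$ which I would establish by elementary calculus for $c \in [0, \bnorm{\bo{C}_{k'}}]$ in the regime where the CSI error keeps $\bnorm{\bo{C}_{k'}}$ bounded away from $1.$ Combining the two contributions by the triangle inequality bounds $\bnorm{\Delta\bo{\Psi}_{k'}}$ by the bracketed quantity, and substituting back into (\ref{eqn:j_k}) yields (\ref{eqn:normjk}).
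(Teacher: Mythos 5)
Your treatment of (\ref{eqn:normik}) is correct and is exactly the paper's argument (it is the content of the paper's inequality (\ref{eqn:an}) applied blockwise). For (\ref{eqn:normjk}), your construction is also sound in spirit and is in fact a \emph{constructive} realization of the paper's proof: one can check that your projected-and-renormalized basis satisfies $(\Delta\bo{\Psi}_{k'})^\H\bo{\Psi}_{k'} = (\bo{I}-\bo{C}_{k'})^{\frac{1}{2}} - \bo{I},$ which is precisely the paper's abstract choice $\bo{Z} = -\bo{I}_{n-m}$ in (\ref{eqn:sol1}), and your two-term splitting of $\Delta\bo{\Psi}_{k'}$ coincides with the paper's identity (\ref{eqn:dn}). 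The genuine gap lies in the last estimate. You bound the renormalization term by $\bnorm{\bo{\Psi}_{k'}'}\,\bnorm{(\bo{I}-\bo{C}_{k'})^{-\frac{1}{2}}-\bo{I}} \leq (1-\bnorm{\bo{C}_{k'}})^{-\frac{1}{2}}-1$ and then invoke the scalar inequality $(1-c)^{-\frac{1}{2}}-1 \leq -\log(1-c).$ That inequality is false on part of the admissible range: with $g(c) = -\log(1-c) - (1-c)^{-\frac{1}{2}}+1,$ one has $g(0)=0,$ $g$ increasing on $[0,3/4)$ but $g(c)\to-\infty$ as $c\to 1,$ with a sign change at $c^\ast \approx 0.92.$ So your chain of inequalities establishes the proposition only under the extra restriction $\bnorm{\bo{C}_{k'}} \leq c^\ast,$ whereas the statement (and the paper's proof, which series-expands $(\bo{I}-\bo{C})^{\frac{1}{2}},$ not its inverse) covers the whole range $\bnorm{\bo{C}_{k'}} < 1$ on which the right-hand side is defined. ``Bounded away from $1$'' is not a sufficient qualifier: $c=0.95$ is bounded away from $1$ and the inequality fails there.

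The fix is available inside your own construction and costs one line: since $\bo{\Psi}_{k'}' = \bar{\bo{\Psi}}_{k'}(\bo{I}-\bo{C}_{k'})^{\frac{1}{2}},$ your second difference equals $\bar{\bo{\Psi}}_{k'}\big[\bo{I}-(\bo{I}-\bo{C}_{k'})^{\frac{1}{2}}\big];$ because $\bar{\bo{\Psi}}_{k'}$ has orthonormal columns, this has norm exactly $\bnorm{\bo{I}-(\bo{I}-\bo{C}_{k'})^{\frac{1}{2}}} = 1-\sqrt{1-\bnorm{\bo{C}_{k'}}} \leq \bnorm{\bo{C}_{k'}} \leq -\log\negmed\big(1-\bnorm{\bo{C}_{k'}}\big),$ valid for all $\bnorm{\bo{C}_{k'}}<1.$ Separately, your parenthetical claim that $\bnorm{\Delta\bo{H}_k}<\bnorm{\bar{\bo{H}}_k}$ guarantees $\bnorm{\bo{C}_{k'}}<1$ is also not right: since $\bnorm{\bo{C}_{k'}} \leq \bnorm{\Delta\bo{F}_{k'}}^2\,\bnorm{\big(\bar{\bo{F}}_{k'}\bar{\bo{F}}_{k'}^\H\big)^{-1}},$ what is needed is that $\bnorm{\Delta\bo{F}_{k'}}$ be small relative to the \emph{smallest} singular value of $\bar{\bo{F}}_{k'},$ not its largest. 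This is a minor point, however, as both your proof and the paper's implicitly assume $\bnorm{\bo{C}_{k'}}<1,$ without which the claimed bound is vacuous.
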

	\begin{proof}
		Please refer to Appendix \ref{app:ikjk}.
	\end{proof}
	
	From Proposition \ref{prop:ikjk}, we observe that $\|\bo{\Xi}^\uparrow_k\|$ depends on the imperfection of user $k$'s own CSI, whereas $\|\bo{\Xi}^\downarrow_k\|$ depends on the imperfections of the other users' CSI. Furthermore, we note that for both $\|\bo{\Xi}^\uparrow_k\|$ and $\|\bo{\Xi}^\downarrow_k\|$ small imperfections of the MIMO channel matrices of the users only cause a small change in the IUI. Hence, for SNS precoding, a user with high quality CSI experiences a \emph{small amount of additional IUI $\|\bo{\Xi}^\uparrow_k\|$ and causes a small amount of additional IUI $\|\bo{\Xi}^\downarrow_k\|$ to others,} thereby ensuring the overall performance and robustness.

	\begin{remark}
		While Proposition \ref{prop:ikjk} only proves the existence of such matrices $\bar{\bo{\Psi}}_{k}, \bo{\Psi}_{k},k=1,\dots,K,$ in Section \ref{sec:sim}, we show that the bounds hold for Gaussian MIMO channels, where $\bar{\bo{\Psi}}_{k}, \bo{\Psi}_{k},k=1,\dots,K,$ are computed via the Gram-Schmidt procedure \cite{Golub2012}.
	\end{remark}
	
	\subsection{Achievable Rate and WSR}
	\label{sec:wsri}
	
	Next, based on the received signal given in (\ref{eqn:y_ri}), the achievable rate is given as follows.
	The rate of the private symbol vector $\bo{s}_{k}$ is given by
	\begin{align}
		\bar{R}_{k} &= \log_2\det\Bigg(\bo{I}_{M_k} + \frac{1}{L_k}\bo{H}_k\bar{\bo{Q}}_{k}\bo{H}_k^\H \nonumber\\&\quad\times\Big[\sigma^2\bo{I}_{M_k} + \frac{1}{L_k}\bo{H}_k \Big(\sum_{k'=1,k'\neq k}^{K}\bar{\bo{Q}}_{k'}\Big)\bo{H}_k^\H\Big]^{-1}\Bigg), \label{eqn:rk2'}
	\end{align}
	where $\bar{\bo{Q}}_{k} = \bar{\bo{\Psi}}_{k} \bo{X}_{k} \bar{\bo{\Psi}}_{k}^\H, k=1,\dots,K.$ Next, for the common symbol vector, the achievable rate at user $k$ is:
	\begin{align}
		\bar{R}_{k,\ur{c}} &= \log_2\det\Bigg(\bo{I}_{M_k} + \frac{1}{L_k}\bo{H}_{k}\bo{Q}_{\ur{c}}\bo{H}_{k}^\H \nonumber\\&\quad\times\Big[\sigma^2\bo{I}_{M_{k}} + \frac{1}{L_k}\bo{H}_{k} \Big(\sum_{k'=1}^{K}\bar{\bo{Q}}_{k'}\Big)\bo{H}_{k}^\H\Big]^{-1}\Bigg), \label{eqn:rkc'}
	\end{align}
	where, as earlier, $\bar{R}_{\ur{c}} = \bmin{\bar{R}_{k,\ur{c}}, k=1,\dots,K}.$ Here, $\bo{Q}_{\ur{c}}$ and $\bo{X}_k,k=1,\dots,K,$ are as defined in Section \ref{sec:mwsr}. Furthermore, the WSR is defined analogously to the definition in (\ref{eqn:wsr}) as follows:
	\begin{align}
		\bar{R}_\mathrm{wsr} = \sum_{k=1}^{K} \eta_k\eta_{\ur{c},k} \bar{R}_{\ur{c}} + \sum_{k = 1}^{K} \eta_k \bar{R}_{k}. \label{eqn:wsr'}
	\end{align}
			
	Now, we describe the changes required for the WSR maximization algorithm in Section \ref{sec:mwsr} for imperfect CSI.

	\subsection{Changes to WSR Maximization Algorithm for Imperfect CSI}
	\label{sec:wsrmi}

	As the BS does not have knowledge of $\bo{H}_k,k=1,\dots,K,$ the true WSR, $\bar{R}_\mathrm{wsr}$ in (\ref{eqn:wsr'}), which depends on $\bo{H}_k,$ cannot be utilized as the objective function for optimizing $\bo{Q}_{\ur{c}}$ and $\bo{X}_k,k=1,\dots,K,$ at the BS. Instead, we propose the use of the objective function:
	\begin{align}
		\bar{\bar{R}}_\mathrm{wsr} &= \sum_{k=1}^{K} \eta_k\eta_{\ur{c},k} \bar{R}_{\ur{c}} \;\Big|_{\bo{H}_k \to \bar{\bo{H}}_k, k=1,\dots,K} \nonumber\\&\quad+ \sum_{k = 1}^{K} \eta_k \bar{R}_{k} \;\Big|_{\bo{H}_k \to \bar{\bo{H}}_k, k=1,\dots,K}, \label{eqn:wsr''}
	\end{align}
	in which $\bo{H}_k$ is replaced by $\bar{\bo{H}}_k,k=1,\dots,K.$ Now, $\bo{Q}_{\ur{c}}$ and $\bo{X}_k,k=1,\dots,K,$ can be optimized at the BS via SCA analogously to the method in Section \ref{sec:mwsr}, i.e., Algorithm \ref{alg:wsrlo} is applicable where the objective function in (\ref{opt:wsrlo}) is updated based on (\ref{eqn:wsr''}).
			
	\section{Simulation Results}
	\label{sec:sim}	
	In this section, we first validate the analytical UBs presented in Proposition \ref{prop:ikjk} and study the computational complexities in Section \ref{sec:snscomp} in terms of the system dimension. Next, we examine the convergence rate of Algorithm \ref{alg:wsrlo} and the optimized beam patterns for SNS and BD precoding. Then, we evaluate the performance of the proposed SNS-based MIMO-RSMA scheme and compare it with the performances of the considered baseline schemes for perfect and imperfect CSI. 
	
	For our simulations, unless specified otherwise, we model the elements of the MIMO channel matrices of the users, which capture the small-scale fading effects, as independent and identically distributed (i.i.d.) Gaussian random variables\footnote{We note that the proposed SNS-based MIMO-RSMA scheme is applicable for arbitrary MIMO channel models \cite{Raghavan2017, Jaeckel2014}, see Figures \ref{fig:wsr-1-1-2-2-4-4-250etc-eta-th} and \ref{fig:wsr-1-1-2-2-4-4-250etc-eta-sik-th}.} $[\boldsymbol{H}_k]_{ij} \sim \mathcal{CN}(0,1),\forall\,i,j,k.$ The scalar path loss for user $k, L_k,$ is set to $d_k^2,$ where $d_k$ denotes the distance (in meters) of user $k$ from the BS. 
	
	The elements of $\Delta\bo{H}_k \sim \mathbb{C}^{M_k\times N}, k=1,\dots,K,$ are also modeled as i.i.d. Gaussian random variables $[\Delta\bo{H}_k]_{ij} \sim \mathcal{CN}(0,\mu_k),\forall\,i,j,k.$ Here, $\mu_k$ models the CSI error variance of user $k.$ Furthermore, we assume that $[\Delta\bo{H}_k]_{ij}$ and $[\bo{H}_k]_{ij},\forall\,i,j,k,$ are statistically independent. Let $\bo{H}_k^{(n)}$ and $\Delta\bo{H}_k^{(n)}$ denote the $n$-th realization of the MIMO channel matrix and the error matrix for user $k,$ respectively. Then, the $n$-th realization of the imperfect MIMO channel matrices of the user $k$ is given by $\frac{1}{\sqrt{\mathstrut L_k}} \bar{\bo{H}}_k^{(n)} = \frac{1}{\sqrt{\mathstrut L_k}} \bo{H}_k^{(n)} + \frac{1}{\sqrt{\mathstrut L_k}} \Delta\bo{H}_k^{(n)}, \quad k=1,\dots,K.$
	
	\subsection{Upper Bounds and Computational Complexities}
	
	\begin{figure*}
		\centering
		\begin{minipage}[t]{0.48\textwidth}
			\centering
			\includegraphics[width=0.8\textwidth]{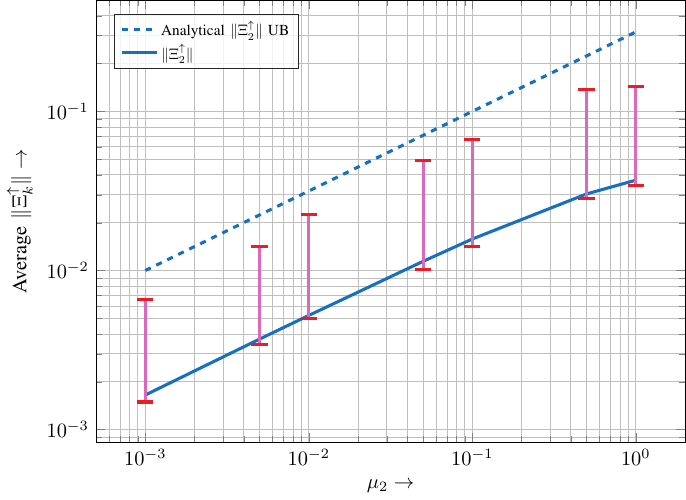}
			\caption{Average $\|\bo{\Xi}^\uparrow_k\|$ with $K=6,$ $N=12,$ $M_k=2,\forall\,k,$ as a function of $\mu_2$ with $d_k = 50\text{ m,}\,\forall\,k,$ and $\mu_k= 0,\forall\,k\neq 2.$ Error bars show the range of values encountered.}
			\label{fig:ik}
		\end{minipage}%
		\hspace{0.02\textwidth}%
		\begin{minipage}[t]{0.48\textwidth}
			\centering
			\includegraphics[width=0.8\textwidth]{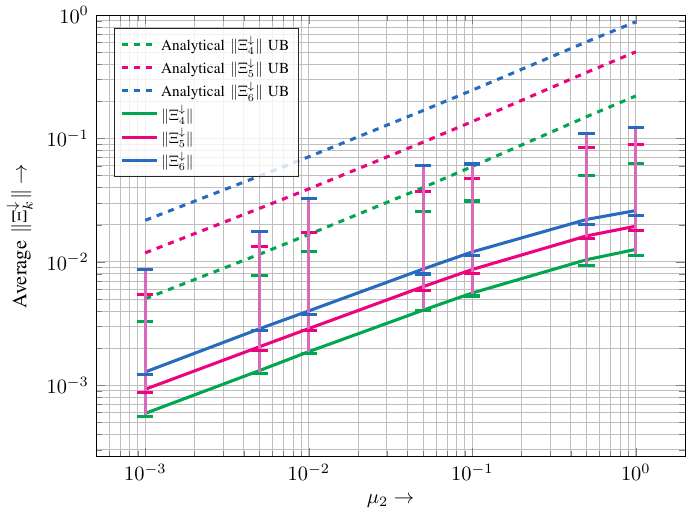}
			\caption{Average $\|\bo{\Xi}^\downarrow_k\|$ with $K=6,$ $N=12,$ $M_k=2,\forall\,k,$ as a function of $\mu_2$ with $d_k = 50\text{ m,}\,\forall\,k,$ and $\mu_k= 0,\forall\,k\neq 2.$ Error bars show the range of values encountered.}
			\label{fig:jk}
		\end{minipage}%
	\end{figure*}
	
	First, we validate the analytical UBs presented in Proposition \ref{prop:ikjk}. For our analysis, we set $K=6, N=12,$ $M_k=2,\forall\,k,$ $d_k = 50 \text{ m},$ and $\|\bo{X}_{k'}^\frac{1}{2}\bo{s}_{k'}\|=1,\forall\,k.$  $\bar{\bo{\Psi}}_{k}, \bo{\Psi}_{k},k=1,\dots,K,$ are computed via the Gram-Schmidt procedure \cite{Golub2012}. We assume the BS to have perfect CSI for all users except for user $2,$ i.e., $\mu_k = 0, k\neq 2,$ and study the impact of the imperfect CSI of user $2$ on the IUI via  $\|\bo{\Xi}^\uparrow_k\|$ and $\|\bo{\Xi}^\downarrow_k\|, k=1,\dots,K.$ Note that, for the given setup, $\|\bo{\Xi}^\uparrow_k\| = 0$ for $k\neq 2,$ and $\|\bo{\Xi}^\downarrow_k\| = 0$ for $k=1,2,3.$
	
	In Figures \ref{fig:ik} and \ref{fig:jk}, $\|\bo{\Xi}^\uparrow_k\|$ and $\|\bo{\Xi}^\downarrow_k\|,$ averaged over $10^4$ MIMO channel realizations, are plotted as a function of $\mu_2.$ Furthermore, the error bars show the range of values of $\|\bo{\Xi}^\uparrow_k\|$ and $\|\bo{\Xi}^\downarrow_k\|$ encountered during the simulations. From the figures, we note that the analytical UBs developed in Proposition \ref{prop:ikjk} accurately bound $\|\bo{\Xi}^\uparrow_k\|$ and $\|\bo{\Xi}^\downarrow_k\|.$ Nevertheless, for high $\mu_k,$ the UB in (\ref{eqn:normjk}) becomes loose as it violates the assumption that $\|\bo{C}\| \ll 1$ in (\ref{eqn:dnnbound}). Furthermore, we observe that the average $\|\bo{\Xi}^\uparrow_k\|$ and $\|\bo{\Xi}^\downarrow_k\|$ are close to the minimum indicating that our choice $\bo{Z} = -\bo{I}_{n-m}$ in (\ref{eqn:sol1}) is suboptimal in most cases. Lastly, we observe a linear dependence between  $\mu_2$ and the resulting IUI, thereby affirming the robustness of SNS precoding, i.e., that small imperfections in CSI cause a small amount of additional IUI.
	
	\begin{figure*}
		\centering
		\begin{minipage}{0.48\textwidth}
			\centering
			\includegraphics[width=0.8\textwidth]{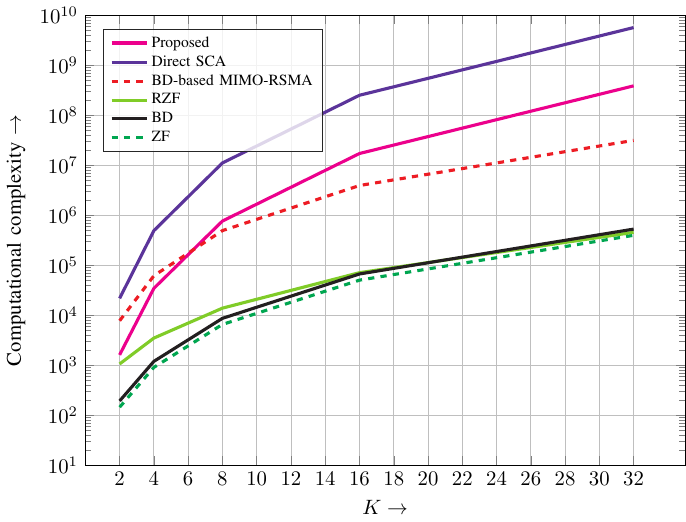}
			\caption{Comparison of complexities of precoder computation and PA for the schemes included in Table \ref{tab:cc} as a function of $K$ for $N=2K,$ $M_k=2,\forall\,k.$}			
			\label{fig:complex}
		\end{minipage}%
		\hspace{0.02\textwidth}%
		\begin{minipage}{0.48\textwidth}
			\centering
			\includegraphics[width=0.8\textwidth]{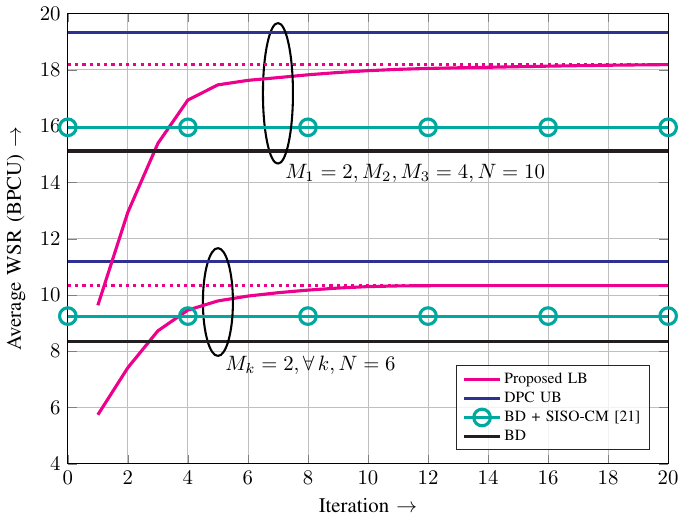}
			\caption{Convergence of Algorithm \ref{alg:wsrlo} with perfect CSI knowledge for $K=3,$ $P_\T = 20 \text{ dBm,}$ $d_k = 50\text{ m,}\,\forall\,k,$ and $\eta_k = \frac{1}{K},\forall\,k.$}
			\label{fig:c-2-4-4-50}
		\end{minipage}%
	\end{figure*}

	Next, in Figure \ref{fig:complex}, we study the complexity of precoder computation and PA for the schemes included in Table \ref{tab:cc} as a function of $K.$ We assume two-antenna users and a critically loaded system with $N = 2K.$ For the iterative schemes in Table \ref{tab:cc}, $N_\mathrm{iter}$ is chosen appropriately to ensure convergence; and the numerical tolerance is set to $\epsilon = 10^{-5}.$ As a measure of complexity, we plot the argument of function $\bO{\cdot}$ given in Table \ref{tab:cc}. From the figure, we observe that `Direct SCA' incurs the largest complexity as it requires solving an optimization problem involving $KN^2$ matrix valued optimization variables. On the other hand, BD, ZF, and RZF entail the lowest complexity among the considered schemes. The proposed precoding scheme incurs an order-of-magnitude lower complexity than `Direct SCA' and, for $K \leq 6,$ a lower complexity than BD-based MIMO-RSMA. For $K > 6,$ the complexity of the proposed SNS-based MIMO-RSMA is higher than that of BD-based MIMO-RSMA.
	
	\subsection{Weighted Sum Rate}

	In this section, we first examine the convergence rate of Algorithm \ref{alg:wsrlo} and the optimized beam patterns for SNS and BD precoding. Next, we compare the performance of SNS-based MIMO-RSMA with that of the baseline schemes. For the following results, the general and fixed-permutation (F-PERM) LBs on the WSR of the proposed SNS-based MIMO-RSMA scheme are obtained as described in Section \ref{sec:mwsr}. An UB for the WSR is obtained based on DPC by exploiting the BC-MAC duality \cite{Vishwanath2003} and \cite[Theorem 1]{Liu2008}. Owing to the difficulty of finding the global optimum of the non-convex rank-constrained optimization problem in (\ref{opt:wsr}), we utilize the DPC UB as a performance upper bound for the WSR of SNS-based MIMO-RSMA. ZF, RZF, and BD precoding are as in \cite{Wiesel2008}, \cite{Peel2005}, and \cite{Spencer2004}, respectively. For BD-based MIMO-RSMA, the scheme in  \cite{Flores2019} (BD+SISO-CM), and the scheme presented in Section \ref{sec:bdrsma} (BD+MIMO-CM) are considered. Lastly, `Direct SCA LB' denotes a LB on the WSR obtained by directly optimizing $\bo{P}_\ur{c},$ $\bo{P}_k,k=1,\dots,K,$ via SCA with zeros as initial values.

	For the general and fixed-permutation LBs, SCA is utilized to solve the WSR maximization problem, as described in Section \ref{sec:sca}. For the DPC UB, ZF precoding, and BD  precoding, the PA problems for WSR maximization are convex optimization problems. For BD+MIMO-CM, the PA problem for WSR maximization is convex for equal user weights $\eta_k,k=1,\dots,K,$ and when the non-convex rank constraint C3 in (\ref{cons:rankc}) is eliminated. In this case, we denote the UB on the maximum WSR by BD+MIMO-CM UB. Otherwise, and for RZF and BD+SISO-CM, an SCA-based solution with zeros as initial values is used.
	
	For the case with imperfect CSI at the BS, as an UB is unknown \cite{Gamal2011}, again the DPC UB for perfect CSI is utilized. The general and fixed-permutation LBs on the WSR for the proposed SNS-based MIMO-RSMA scheme are obtained as described in Section \ref{sec:wsrmi}. For the other schemes, WSR optimization is carried out using the estimated MIMO channel matrices of the users, and the WSR is computed based on the obtained optimized precoders and actual MIMO channel matrices, analogous to the procedure used for SNS-based MIMO-RSMA.
	
	The obtained WSRs are averaged over multiple MIMO channel realizations, for both perfect and imperfect CSI, so as to obtain a $99\%$ confidence interval of $\pm 1$ bit per channel use (BPCU) for the average WSRs. Lastly, for our simulations, we utilize $\eta_{c,k} = \eta_k,$ noise variance $\sigma^2 = -35 \text{ dBm,}$ and, in Algorithm \ref{alg:wsrlo}, numerical tolerance $\epsilon = 10^{-5}.$ 
	
	First, in Figure \ref{fig:c-2-4-4-50}, we study the convergence of Algorithm \ref{alg:wsrlo} for $K=3,$ $N=6,$ $M_k=2,\forall,k,$ and $N=10,$ $M_k=(2,4,4).$ The transmit power budget is set to $P_\T = 20 \text{ dBm.}$ The user distances from the BS are set to $d_k=50 \text{ m},k=1,2,3.$ From the figure, we observe that, for the considered cases, the WSR converges in about $20$ iterations. Furthermore, comparing the convergence speed for the two different antenna configurations, we note that the convergence speed decreases only moderately as the number of variables increases.
	
	\begin{figure*}
		\centering
		\begin{minipage}[t]{0.48\textwidth}
			\centering
			\includegraphics[width=0.8\textwidth]{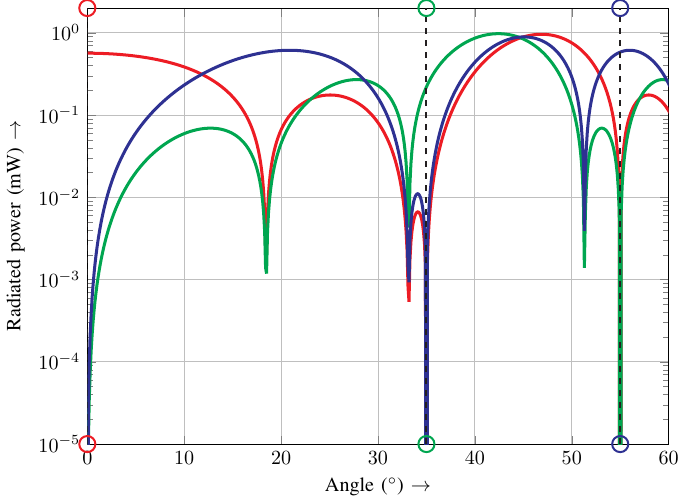}
			\caption{Beam pattern of the optimized BD precoding vectors for $K=4, N=4,$ $M_k=1,\forall\,k,$ $d_k = 50\text{ m,}\,\forall\,k,$ $P_\T = 20 \text{ dBm,}$ and $\eta_k = \frac{1}{K},\forall\,k.$}
			\label{fig:pbd}
		\end{minipage}%
		\hspace{0.02\textwidth}%
		\begin{minipage}[t]{0.48\textwidth}
			\centering
			\includegraphics[width=0.8\textwidth]{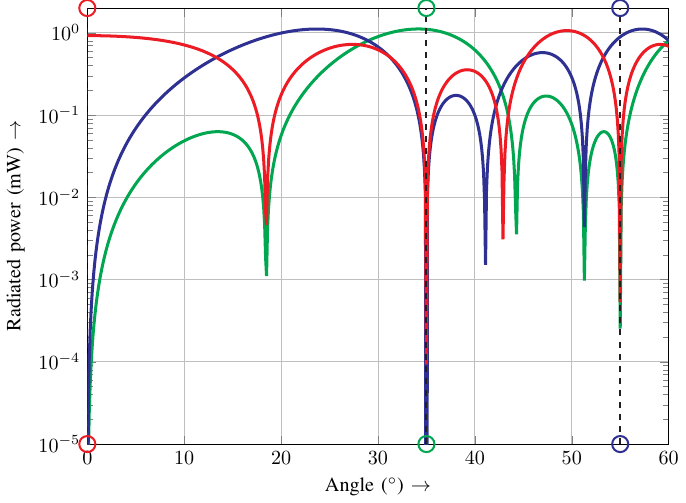}
			\caption{Beam pattern of the optimized SNS precoding vectors for $K=4, N=4,$ $M_k=1,\forall\,k,$ $d_k = 50\text{ m,}\,\forall\,k,$ $P_\T = 20 \text{ dBm,}$ and $\eta_k = \frac{1}{K},\forall\,k.$}
			\label{fig:psns}
		\end{minipage}%
	\end{figure*}

	Next, in Figures \ref{fig:pbd} and \ref{fig:psns}, we show the beam patterns of the BD and SNS precoding vectors based on the parametric MIMO channel model in \cite{Raghavan2017}. To this end, we consider a critically loaded system with $K=4, N=4,$ $M_k=1,\forall\,k,$ $d_k = 50\text{ m,}\,\forall\,k,$ $P_\T = 20 \text{ dBm,}$ and $\eta_k = \frac{1}{K},\forall\,k,$ with the four users located at angles $0^\circ, 35^\circ, 55^\circ,$ and $85^\circ,$ and plot the radiated power as a function of the transmit angle. From Figure \ref{fig:pbd}, we note that, for BD, as expected, the beam nulls are located such that IUI is completely eliminated. On the other hand, in Figure \ref{fig:psns}, for SNS precoding, we observe a residual interference of $10^{-4}$ mW from user $1$ to user $2$ and a residual interference of about $10^{-3}$ mW to $10^{-4}$ mW from users $1$ and $2$ to user $3.$ Moreover, we note that the power radiated towards the users is higher for SNS precoding compared to BD precoding due to the increased flexibility obtained by allowing controlled interference, resulting in a better performance. Furthermore, for BD precoding, as IUI is completely eliminated by design, a minor beam misalignment, e.g., due to imperfect CSI, causes a large additional IUI and poor performance. However, for SNS precoding, since users $2$ and $3$ already experience residual IUI, the impact of a minor beam misalignment is lower, thereby resulting in an enhanced robustness.
	
	\begin{figure*}
		\centering
		\begin{minipage}[t]{0.48\textwidth}
			\centering
			\includegraphics[width=0.8\textwidth]{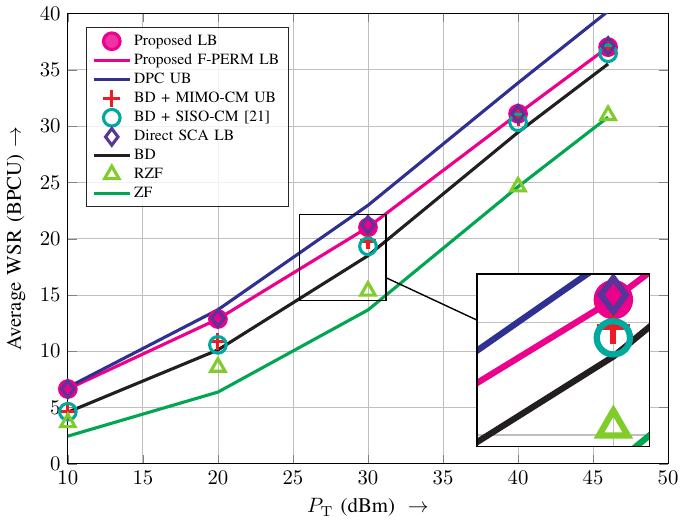}
			\caption{Average WSR for perfect CSI knowledge as a function of $P_\T$ for $K=3,$ $N=10,$ $M_k=(2,4,4),$  $d_k = (250,150,50)\text{ m,}$ and $\eta_k = \frac{1}{K},\forall\,k.$}
			\label{fig:wsr-2-4-4-250etc}
		\end{minipage}%
		\hspace{0.02\textwidth}%
		\begin{minipage}[t]{0.48\textwidth}
			\centering
			\includegraphics[width=0.8\textwidth]{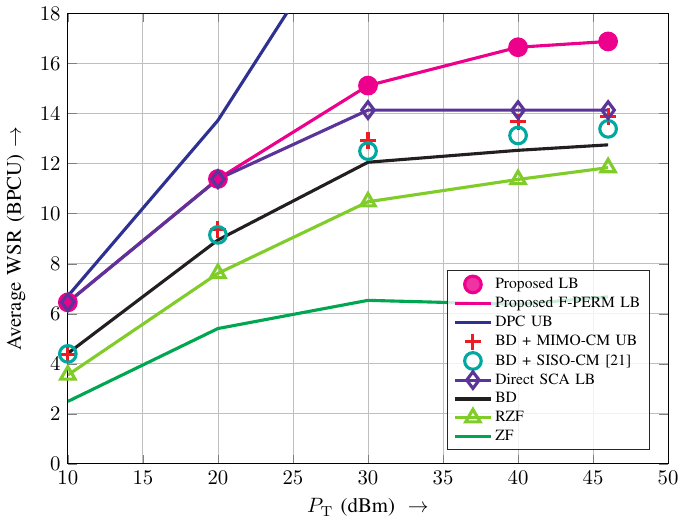}
			\caption{Average WSR for imperfect CSI knowledge as a function of $P_\T$ for $K=3,$ $N=10,$ $M_k=(2,4,4),$  $d_k = (250,150,50)\text{ m,}$ $\eta_k = \frac{1}{K},\forall\,k,$ and $\mu_k= (0.5,0.1,0.01).$}
			\label{fig:wsr-2-4-4-250etc-sik}
		\end{minipage}
	\end{figure*}
	
	Now, we study the ergodic WSR in detail based on the i.i.d. Gaussian MIMO channel model described earlier. In Figures \ref{fig:wsr-2-4-4-250etc} and  \ref{fig:wsr-2-4-4-250etc-sik}, we consider a critically loaded system with $K=3, N=10,$ and $M_k=(2,4,4),$ for perfect and imperfect CSI knowledge at the BS, respectively. The user distances are set to $d_k = (250,150,50)\text{ m,}$ and for the imperfect CSI, $\mu_k= (0.5,0.1,0.01).$ The user weights are set to be equal. From Figure \ref{fig:wsr-2-4-4-250etc}, we note that the proposed LB outperforms RZF, ZF, and BD precoding, and BD-based MIMO-RSMA. This is because, unlike BD precoding, SNS precoding exploits the unused DoFs of the users with lower rates and the DoFs of the users with (near) orthogonal MIMO channels matrices to enhance performance. Furthermore, ZF precoding has a poor performance in the considered critically loaded system as it requires the inversion of an ill-conditioned matrix, see \cite{Krishnamoorthy2020} for more details. In addition, for perfect CSI and the noise-limited low SNR regime, the performance of SNS-based MIMO-RSMA is close to that of DPC. However, in the interference-limited high SNR regime, the performance of the proposed LB approaches that of BD+MIMO-CM UB. This is because, in an interference limited system, the best strategy is to decode all substantial interference. Moreover, although the proposed LB and the Direct SCA LB have negligible performance difference for perfect CSI, from Figure \ref{fig:wsr-2-4-4-250etc-sik}, we observe that, for imperfect CSI, SNS-based MIMO-RSMA outperforms Direct SCA LB and the considered baseline schemes owing to the robustness of SNS precoding, see Section \ref{sec:iuiimp} for details. In this case, we also observe that the performance loss due to fixing the user index permutations (F-PERM) is negligible on average.
	
	\begin{figure*}
		\centering
		\begin{minipage}[t]{0.48\textwidth}
			\centering
			\includegraphics[width=0.8\textwidth]{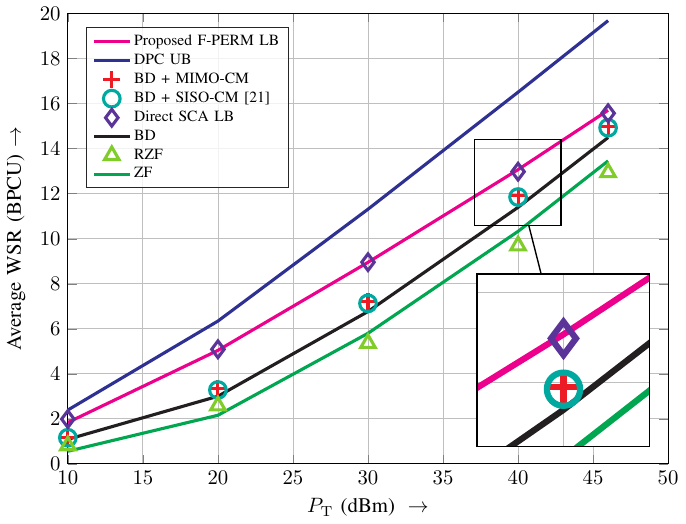}
			\caption{Average WSR for perfect CSI knowledge as a function of $P_\T$ for $K=6,$ $N=14,$ $M_k=(1,1,2,2,\\4,4),$ $d_k = (250,250,150,150,50,50)\text{ m,}$ and $\eta_k = (0.3, 0.3, 0.15, 0.15, 0.05, 0.05),$ and the i.i.d. Gaussian MIMO channel model.}
			\label{fig:wsr-1-1-2-2-4-4-250etc-eta}
		\end{minipage}%
		\hspace{0.02\textwidth}%
		\begin{minipage}[t]{0.48\textwidth}
			\centering
			\includegraphics[width=0.8\textwidth]{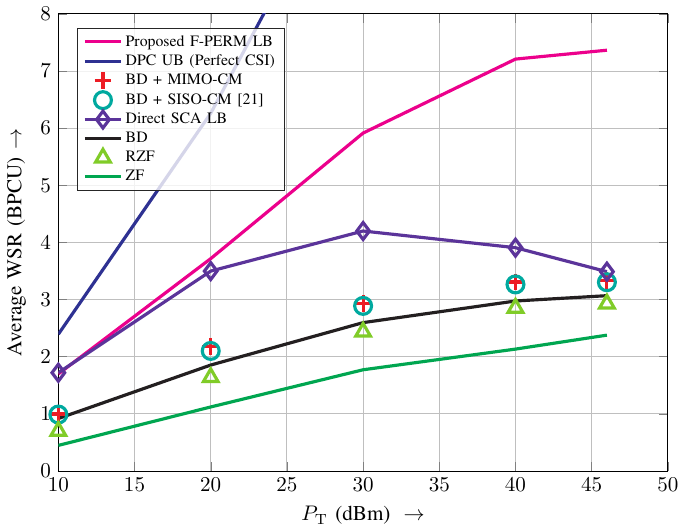}
			\caption{Average WSR for imperfect CSI knowledge as a function of $P_\T$ for $K=6,$ $N=14,$ $M_k=(1,1,2,2,\\4,4),$ $d_k = (250,250,150,150,50,50)\text{ m,}$ $\eta_k \\= (0.3, 0.3, 0.15, 0.15, 0.05, 0.05),$ and $\mu_k= 0.01,\forall\,k,$ and the i.i.d. Gaussian MIMO channel model.}
			\label{fig:wsr-1-1-2-2-4-4-250etc-eta-sik}
		\end{minipage}
	\end{figure*}

	Next, in Figures \ref{fig:wsr-1-1-2-2-4-4-250etc-eta} and \ref{fig:wsr-1-1-2-2-4-4-250etc-eta-sik}, we consider a critically loaded system with $K=6,$ $N=14,$ and $M_k=(1,1,2,2,4,4),$ for perfect and imperfect CSI knowledge at the BS, respectively, and users with unequal distances and weights. The user distances are set to $d_k = (250,250,150,150,50,50)\text{ m,}$ the user weights to $\eta_k = (0.3, 0.3, 0.15, 0.15, 0.05, 0.05),$ and for imperfect CSI, $\mu_k = 0.01,\forall\,k.$ Furthermore, in order to avoid optimization over $6!$ user index combinations, only the fixed-permutation LBs for the WSRs of the proposed SNS-based MIMO-RSMA scheme are presented. From Figures \ref{fig:wsr-1-1-2-2-4-4-250etc-eta} and \ref{fig:wsr-1-1-2-2-4-4-250etc-eta-sik}, we observe that, in this case also, the proposed F-PERM LB significantly outperforms the baseline schemes for both perfect and imperfect CSI. Furthermore, for imperfect CSI and high SNRs, unlike SNS precoding, the performance of Direct SCA LB degrades to that of BD+MIMO-CM because the large transmit power increases the IUI.

	\begin{figure*}
		\centering
		\begin{minipage}[t]{0.48\textwidth}
			\centering
			\includegraphics[width=0.8\textwidth]{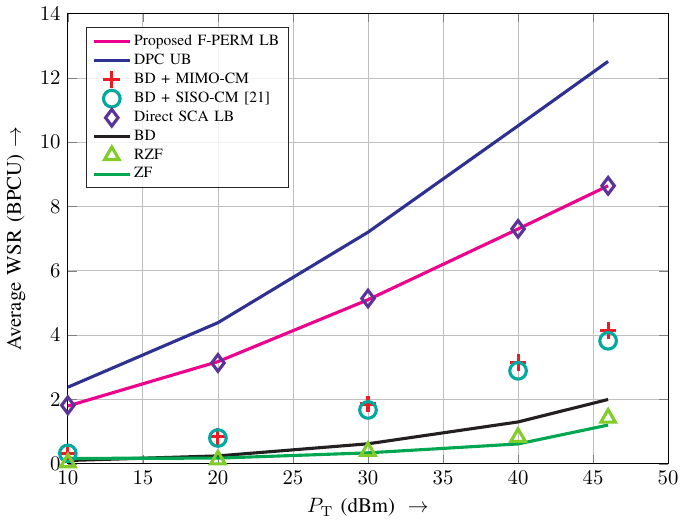}
			\caption{Average WSR for perfect CSI knowledge as a function of $P_\T$ for $K=6,$ $N=14,$ $M_k=(1,1,2,2,\\4,4),$ $d_k = (250,250,150,150,50,50)\text{ m,}$ $\eta_k = (0.3, 0.3, 0.15, 0.15, 0.05, 0.05),$ and the QuaDRiGa channel model \cite{Jaeckel2014}.}
			\label{fig:wsr-1-1-2-2-4-4-250etc-eta-th}
		\end{minipage}%
		\hspace{0.02\textwidth}%
		\begin{minipage}[t]{0.48\textwidth}
			\centering
			\includegraphics[width=0.8\textwidth]{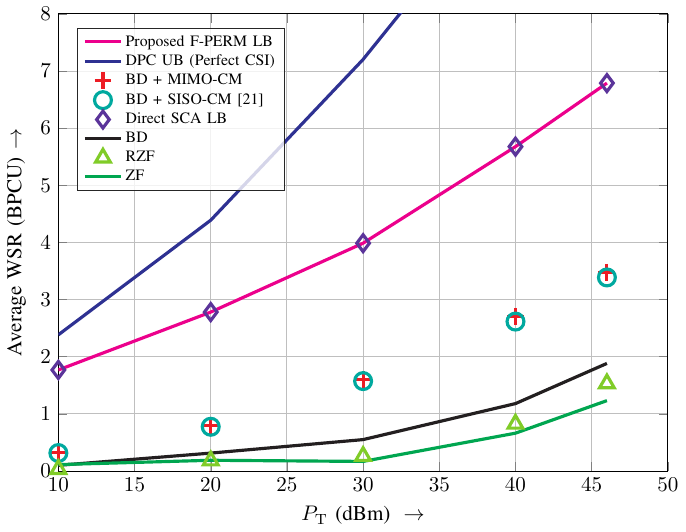}
			\caption{Average WSR for imperfect CSI knowledge as a function of $P_\T$ for $K=6,$ $N=14,$ $M_k=(1,1,2,2,4,4),$ $d_k = (250,250,150,150,50,\\50)\text{ m,}$ $\eta_k = (0.3, 0.3, 0.15, 0.15, 0.05, 0.05),$ $\mu_k= 0.01,\forall\,k,$ and the QuaDRiGa channel model \cite{Jaeckel2014}.}
			\label{fig:wsr-1-1-2-2-4-4-250etc-eta-sik-th}
		\end{minipage}
	\end{figure*}
	
	Next, in Figures \ref{fig:wsr-1-1-2-2-4-4-250etc-eta-th} and \ref{fig:wsr-1-1-2-2-4-4-250etc-eta-sik-th}, we adopt the same system parameters as in Figures \ref{fig:wsr-1-1-2-2-4-4-250etc-eta} and \ref{fig:wsr-1-1-2-2-4-4-250etc-eta-sik} but replace the i.i.d. Gaussian MIMO channel model with the QuaDRiGa channel model \cite{Jaeckel2014}. For imperfect CSI, the elements of $\Delta\bo{H}_k \sim \mathbb{C}^{M_k\times N}, k=1,\dots,K,$ are modeled in the same manner as before. For the BS, we utilize three antenna arrays with $N$ elements, half-wavelength spacing, and $120^\circ$ beamwidth to illuminate the entire $360^\circ$ space. The single-antenna users employ omnidirectional receive antennas, and the multi-antenna users employ uniform linear arrays. The noise variance is set to $\sigma^2 = -90 \text{ dBm.}$ The \texttt{3GPP\_38.901\_UMa} scenario \cite{Jaeckel2014} is utilized for simulation, and the users are distributed randomly and uniformly around the BS. The resulting MIMO channels of the users are \emph{highly correlated.} From Figures \ref{fig:wsr-1-1-2-2-4-4-250etc-eta-th} and \ref{fig:wsr-1-1-2-2-4-4-250etc-eta-sik-th}, we observe that for both perfect and imperfect CSI, owing to the highly correlated channels, RZF, ZF, and BD precoding have vastly inferior performance compared to DPC UB. Furthermore, the proposed SNS-based MIMO-RSMA and Direct SCA LB have similar performances for both perfect and imperfect CSI and significantly outperform the considered baseline schemes. The similar performances for imperfect CSI are due to the fact that, unlike for the i.i.d. Gaussian channels in Figure \ref{fig:wsr-1-1-2-2-4-4-250etc-eta-sik}, for the considered highly correlated channels, not all of the $M_k$ elements of user $k$'s symbol vector $\bo{s}_k,k=1,\dots,K,$ are actually utilized resulting in unused DoFs at the BS, which are readily exploited by both schemes to enhance the robustness against imperfect CSI.
	
	\begin{remark}
		From Figures \ref{fig:wsr-2-4-4-250etc}-\ref{fig:wsr-1-1-2-2-4-4-250etc-eta-sik-th}, we observe that the proposed SNS-based MIMO-RSMA provides a trade-off between performance and computational complexity compared to the considered linear precoding schemes.
	\end{remark}

	\begin{figure*}
		\centering
		\begin{minipage}[t]{0.48\textwidth}
			\centering
			\includegraphics[width=0.8\textwidth]{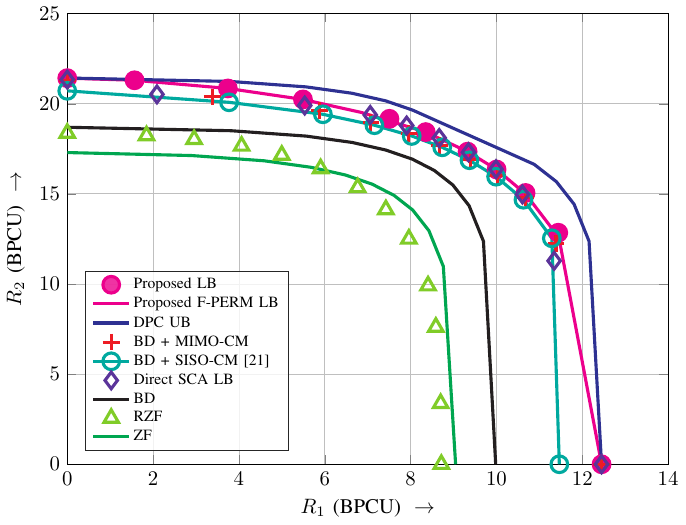}
			\caption{Averaged rate region with perfect CSI knowledge for $K=2,$ $N=4,$ $M_k=2,\forall\,k,$ $d_k = (250,50)\text{ m,}$ and $\eta_1 \in [0,1], \eta_2 = 1-\eta_1.$}
			\label{fig:rr-2-2-250etc}
		\end{minipage}%
		\hspace{0.02\textwidth}%
		\begin{minipage}[t]{0.48\textwidth}
			\centering
			\includegraphics[width=0.8\textwidth]{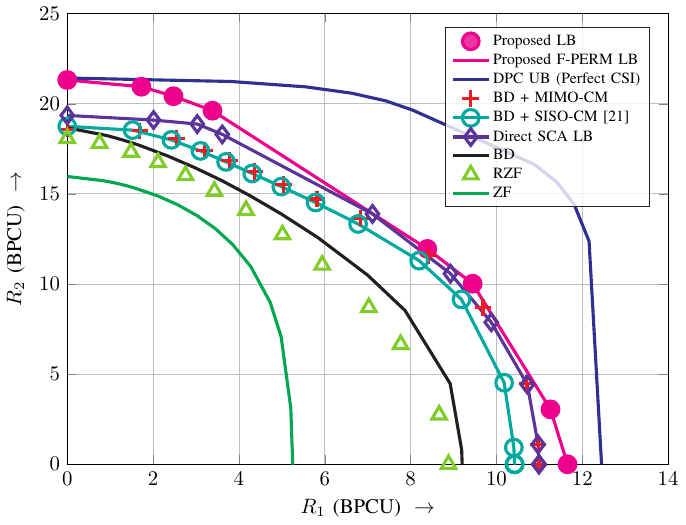}
			\caption{Averaged rate region with perfect CSI knowledge for $K=2,$ $N=4,$ $M_k=2,\forall\,k,$ $d_k = (250,50)\\\text{m,}$ $\eta_1 \in [0,1], \eta_2 = 1-\eta_1,$ and $\mu_k= (0.5,0.001).$}
			\label{fig:rr-2-2-250etc-sik}
		\end{minipage}
	\end{figure*}
	
	Now, for the following figures, we again consider the i.i.d. Gaussian channel model described earlier. In Figures \ref{fig:rr-2-2-250etc} and \ref{fig:rr-2-2-250etc-sik}, we compare the two-user rate regions of the proposed SNS-based MIMO-RSMA scheme with those of the baseline schemes for perfect and imperfect CSI, respectively. We set $d_k = (250,50)\text{ m}$ and $\mu_k= (0.5,0.001).$ For perfect CSI, the proposed LB, BD-based MIMO-RSMA, and Direct SCA LB achieve a significantly larger rate region compared to RZF, ZF, and BD precoding. Nevertheless, for imperfect CSI, the proposed SNS-based MIMO-RSMA scheme outperforms all baseline schemes owing to its enhanced robustness, see Section \ref{sec:iuiimp} for details. Furthermore, for perfect CSI and user rate pairs close to the single user rates of the (farther) user experiencing lower SNR, ZF outperforms RZF, see also \cite{Sung2009}. 
		
	\section{Conclusion}
	\label{sec:con}
	In this paper, we considered the precoder design for an underloaded or critically loaded downlink MU-MIMO communication system employing RS at the transmitter and single-stage SIC at the receivers. We proposed SNS precoding and decoding schemes which utilize linear combinations of the null-space basis vectors of the successively augmented MIMO channel matrices of the users as precoding vectors to adjust the IUI. For perfect CSI knowledge at the BS, we formulated the WSR maximization problem and obtained a feasible LB for the maximum WSR of the proposed SNS scheme via SCA. For imperfect CSI knowledge at the BS, we utilized derivative-based SA to study the robustness of SNS precoding. Our simulation results revealed that the proposed SNS-based MIMO-RSMA scheme outperformed RZF, ZF, and BD precoding as well as BD-based MIMO-RSMA, especially for imperfect CSI, with a $10 \%$ to $50 \%$ performance improvement in terms of the WSR.
	
	Motivated by our results, we believe the following avenues for future work are promising. Firstly, as numerical optimization of the combining matrices $\bo{X}_k,k=1,\dots,K,$ is not desirable in practice, finding closed-form solutions for the combining matrices is crucial for facilitating practical deployment of SNS-based MIMO-RSMA. Furthermore, low-complexity element-by-element decoding schemes are also of interest. Next, in this paper, we assume perfect knowledge of the path loss at the BS and the CSI at the users in order to focus on the impact of fast-varying small-scale fading on SNS precoding itself. In practice, also the path loss at the BS and the CSI at the users are susceptible to estimation errors, which causes the signal reconstruction and SIC to become imperfect. Further analysis of the impact of these imperfections on the performance of SNS-based MIMO-RSMA is of high practical significance. Furthermore, a study of the achievable rate of SNS-based MIMO-RSMA for large antenna arrays as well as  favorable and unfavorable MIMO channel conditions is also of interest. Moreover, the application of the robust power allocation scheme proposed in \cite{Joudeh2016} to the proposed SNS-based MIMO-RSMA is worth investigating. Lastly, other low-complexity user index permutation selection schemes utilizing heuristic methods, as in, e.g., \cite{Tejera2006,Guthy2009,Utschick2018}, are an interesting topic for future studies.
	\begin{appendices}
	\renewcommand{\thesection}{\Alph{section}}
	\renewcommand{\thesubsection}{\thesection.\arabic{subsection}}
	\renewcommand{\thesectiondis}[2]{\Alph{section}:}
	\renewcommand{\thesubsectiondis}{\thesection.\arabic{subsection}:}

	\section{Proof of Proposition \ref{prop:bdcomp}}
	\label{app:bdcomp}
	For user $k,$ the SNS basis vectors for $\bo{\Psi}_k$ can be chosen as $\bo{\Psi}_k = \begin{bmatrix} \bo{\Psi}_k^{\textrm{BD}} & \bo{\Psi}_k^{\textrm{C}}\end{bmatrix},$	where the first $M_k$ columns contain the BD vectors, and the remaining $N_k-M_k$ columns are found with the Gram-Schmidt procedure \cite{Golub2012}. Based on the above definition of $\bo{\Psi}_k,$ $R_{k}$ in (\ref{eqn:rk2}) can be rewritten as in (\ref{eqn:rk2bd}), shown on top of the next page,
	\begin{figure*}
	\begin{align}
		R_{k} &= \log_2\det\Bigg(\bo{I}_{M_k} + \bo{H}_k \Big( 
		\bo{\Psi}_k^{\textrm{BD}} \bo{X}_{k}^{(1)} (\bo{\Psi}_k^{\textrm{BD}})^\H
		+ \bo{\Psi}_k^{\textrm{BD}} \bo{X}_{k}^{(2)} (\bo{\Psi}_k^{\textrm{C}})^\H
		+ \bo{\Psi}_k^{\textrm{C}} \bo{X}_{k}^{(3)} (\bo{\Psi}_k^{\textrm{BD}})^\H \nonumber\\
		&\qquad  + \bo{\Psi}_k^{\textrm{C}} \bo{X}_{k}^{(4)} (\bo{\Psi}_k^{\textrm{C}})^\H \Big) \bo{H}_k^\H  
		\Big[\sigma^2\bo{I}_{M_k} + \bo{H}_k \Big(\sum_{k'=1}^{k-1} \bo{\Psi}_{k'}^{\textrm{C}} \bo{X}_{k'}^{(4)} (\bo{\Psi}_{k'}^{\textrm{C}})^\H\Big)\bo{H}_{k'}^\H\Big]^{-1}\Bigg) \label{eqn:rk2bd}
	\end{align}
	\end{figure*}
	where $\bo{X}_k$ is partitioned as follows:
	\begin{align}
		\bo{X}_k = \; \begin{bNiceMatrix}[first-row,first-col]
			&  M_k             & N_k-M_k \\
			M_k & \bo{X}_{k}^{(1)} & \bo{X}_{k}^{(3)} \\
			N_k-M_k & \bo{X}_{k}^{(2)} & \bo{X}_{k}^{(4)}
		\end{bNiceMatrix}.
	\end{align}
	
	By setting $\bo{X}_{k}^{(2)}, \bo{X}_{k}^{(3)},$ and $\bo{X}_{k}^{(4)}$ in (\ref{eqn:rk2bd}) to zero and restricting the structure of $\bo{X}_{k}^{(1)}$ to $\bo{V}_k^{\textrm{BD}}\bo{D}_k^{\textrm{BD}}(\bo{V}_k^{\textrm{BD}})^\H,$ (\ref{opt:wsr}) reduces to the WSR optimization problem for BD-based MIMO-RSMA. However, since $\bo{X}_{k}^{(1)},$ $\bo{X}_{k}^{(2)},$ $\bo{X}_{k}^{(3)},$ and $\bo{X}_{k}^{(4)}$ can be optimized for SNS-based MIMO-RSMA, we necessarily have $R_\mathrm{wsr}^\star \geq R_\mathrm{wsr}^{\textrm{BD+MIMO-CM}\star}.$ Furthermore, for BD precoding, since $\bo{P}_\ur{c}$ is additionally restricted to $\bo{0},$ we obtain $R_\mathrm{wsr}^\star \geq R_\mathrm{wsr}^{\textrm{BD+MIMO-CM}\star} \geq R_\mathrm{wsr}^{\textrm{BD}\star}.$
	
	Next, for ZF precoding, $\bo{P}_k,k=1,\dots,K,$ in (\ref{eqn:pkzf}) can be simplified using the Schur complement \cite{Horn2012} to $\bo{P}_k = \bo{\Psi}_k^{\textrm{BD}} (\bo{H}_k \bo{\Psi}_k^{\textrm{BD}})^+ (\bo{D}_k^{\textrm{ZF}})^\frac{1}{2}.$ Hence, again, based on (\ref{eqn:rk2bd}) and appropriate substitutions, (\ref{opt:wsr}) can be reduced to the WSR optimization problem for ZF precoding. Therefore, we have $R_\mathrm{wsr}^\star \geq R_\mathrm{wsr}^{\textrm{ZF}\star}.$ \qed

	\section{Proof of Proposition \ref{prop:fo}}
	\label{app:fo}
	In order to prove Proposition \ref{prop:fo}, we first derive the following general result.
	
	\begin{lemma}
		\label{lem:fo}
		Let $m, n > 0, m \leq n,$ and $\bo{A} \in \mathbb{C}^{m\times n}$ be a matrix with full row rank. Furthermore, let $\bo{X},\bo{E} \in \mathbb{C}^{n\times n}$ be positive semi-definite matrices such that $\bo{Y} = \bo{X} + \bo{E}$ is positive semi-definite and in the neighborhood of $\bo{X}.$ Then, the first-order approximation of
		\begin{align}
			f(\bo{Y}) = \blogdet{\bo{I}_m + \bo{A}\bo{Y}\bo{A}^\H} \label{eqn:blogdet}
		\end{align}
		is given by
		\begin{align}
			\tilde{f}(\bo{Y}) &= \blogdet{\bo{I}_m + \bo{A}\bo{X}\bo{A}^\H} \nonumber\\&\quad+ \btr{\bo{A}^\H \Big(\bo{I}_m + \bo{A}\bo{X}\bo{A}^\H\Big)^{-1} \bo{A} (\bo{Y}-\bo{X})}. \label{eqn:tildef}
		\end{align}
	\end{lemma}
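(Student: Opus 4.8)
The plan is to recognize Lemma \ref{lem:fo} as nothing more than the first-order Taylor expansion of the scalar-valued function $f$ of the Hermitian matrix argument $\bo{Y}$, linearized about $\bo{X}$, so that $\tilde{f}(\bo{Y}) = f(\bo{X}) + \langle \nabla f(\bo{X}),\, \bo{Y}-\bo{X}\rangle$, and then to identify the linear term explicitly via standard matrix calculus. First I would record the key structural fact that makes everything well defined: since $\bo{Y} \succcurlyeq \bo{0}$ we have $\bo{A}\bo{Y}\bo{A}^\H \succcurlyeq \bo{0}$, hence $\bo{M}(\bo{Y}) := \bo{I}_m + \bo{A}\bo{Y}\bo{A}^\H \succcurlyeq \bo{I}_m$ is invertible for every admissible $\bo{Y}$. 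This guarantees that $f$ is smooth in a neighborhood of $\bo{X}$ and that the expansion below is valid throughout that neighborhood.

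Next I would compute the differential directly. Writing $\bo{M} = \bo{M}(\bo{X}) = \bo{I}_m + \bo{A}\bo{X}\bo{A}^\H$ and the perturbation $\bo{E} = \bo{Y}-\bo{X}$, I factor
\begin{align}
	\bo{M}(\bo{Y}) = \bo{M} + \bo{A}\bo{E}\bo{A}^\H = \bo{M}\big(\bo{I}_m + \bo{M}^{-1}\bo{A}\bo{E}\bo{A}^\H\big),
\end{align}
so that, by multiplicativity of the determinant, $f(\bo{Y}) = \blogdet{\bo{M}} + \blogdet{\bo{I}_m + \bo{M}^{-1}\bo{A}\bo{E}\bo{A}^\H}$. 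Setting $\bo{\Delta} = \bo{M}^{-1}\bo{A}\bo{E}\bo{A}^\H$ and using the eigenvalue expansion $\blogdet{\bo{I}_m + \bo{\Delta}} = \sum_i \log(1+\lambda_i(\bo{\Delta})) = \btr{\bo{\Delta}} + \bO{\|\bo{\Delta}\|^2}$, which follows from $\log(1+\lambda)=\lambda+\bO{\lambda^2}$ for the (small) eigenvalues of $\bo{\Delta}$, and discarding the quadratic remainder, the linear part of $f(\bo{Y})-f(\bo{X})$ is $\btr{\bo{M}^{-1}\bo{A}\bo{E}\bo{A}^\H}$. Finally, the cyclic property of the trace rearranges this to $\btr{\bo{A}^\H\bo{M}^{-1}\bo{A}\bo{E}}$; substituting $\bo{E} = \bo{Y}-\bo{X}$ and $\bo{M} = \bo{I}_m + \bo{A}\bo{X}\bo{A}^\H$ then reproduces exactly the claimed expression (\ref{eqn:tildef}).

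I expect no serious obstacle, since the result is a routine linearization of $\blogdet{\cdot}$; the only points meriting care are justifying the differential of the log-determinant for a complex (Hermitian) argument and confirming that $\bo{M}(\bo{Y})$ remains invertible across the neighborhood, both of which are settled by the positive-semidefiniteness of $\bo{A}\bo{Y}\bo{A}^\H$ noted at the outset. Once the linear functional $\bo{E}\mapsto \btr{\bo{A}^\H\bo{M}^{-1}\bo{A}\bo{E}}$ is identified as the Fréchet derivative of $f$ at $\bo{X}$, the stated first-order approximation $\tilde{f}$ follows immediately, and the lemma is then applied in Appendix \ref{app:fo} with $\bo{A}$ and $\bo{Y}$ chosen to match the effective channel and covariance terms appearing in (\ref{eqn:rk2}) and (\ref{eqn:rkc2}).
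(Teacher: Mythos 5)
Your proposal is correct and follows essentially the same route as the paper's proof: factor out $\bdet{\bo{I}_m+\bo{A}\bo{X}\bo{A}^\H}$, expand $\blogdet{\bo{I}_m+\bo{\Delta}}$ through the eigenvalues of $\bo{\Delta}=\big(\bo{I}_m+\bo{A}\bo{X}\bo{A}^\H\big)^{-1}\bo{A}\bo{E}\bo{A}^\H$ to first order, and identify the linear term as a trace via cyclicity. The only difference is that the paper additionally invokes \cite[Cor. 2.6]{Srinivasan2020} to handle the restriction of the Fr\'{e}chet derivative to Hermitian perturbations, a step that is a no-op there because the derivative matrix $\bo{A}^\H\big(\bo{I}_m+\bo{A}\bo{X}\bo{A}^\H\big)^{-1}\bo{A}$ is itself Hermitian, and which your direct evaluation along the Hermitian perturbation $\bo{E}=\bo{Y}-\bo{X}$ legitimately sidesteps.
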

	\begin{proof}
		We equip the space of $m \times n$ complex-valued matrices with the inner product: $\langle \bo{A}, \bo{B} \rangle = \btr{\bo{A}^\H \bo{B}}\negmedspace.$ As a first step, we consider the case where $\bo{X}$ and $\bo{E}$ are not necessarily positive semi-definite and (Hermitian) symmetric\footnote{We note that complex-valued positive semi-definite matrices are necessarily Hermitian symmetric.}, and denote the corresponding function as $f_\ur{a}(\bo{Y}).$ By substituting the value of $\bo{Y}$ and simplifying (\ref{eqn:blogdet}), we obtain
		\begin{align}
			f_\ur{a}(\bo{Y}) &= \blogdet{\bo{I}_m + \bo{A}\bo{X}\bo{A}^\H} \nonumber\\&\quad+ \underbrace{\blogdet{\bo{I}_m + \Big(\bo{I}_m + \bo{A}\bo{X}\bo{A}^\H\Big)^{-1}\bo{A}\bo{E}\bo{A}^\H}}_{f_\ur{e}(\bo{E})}.
		\end{align}
		Now, for small $\bnorm{\bo{E}} \ll 1,$ $f_\ur{e}(\bo{E})$ can be expanded in terms of the eigenvalues $\lambda_1,\dots,\lambda_m$ of $\Big(\bo{I}_m +\bo{A}\bo{X}\bo{A}^\H\Big)^{-1}\bo{A}\bo{E}\bo{A}^\H$ as follows:
		\begin{align}
			f_\ur{e}(\bo{E}) &= \sum_{i=1}^{m} \log_\ur{e}(1+\lambda_i) = \sum_{k=1}^{+\infty}\sum_{i=1}^{m} \frac{(-1)^{k+1}}{k}\lambda_i^k \nonumber\\&\overset{(a)}{=} \btr{\bo{A}^\H\Big(\bo{I}_m +\bo{A}\bo{X}\Big)^{-1}\bo{A}\bo{E}} + \bO{\bnorm{\bo{E}}^2}, \label{eqn:frecstep1}
		\end{align}
	where (a) holds because for generic matrices $\bo{X}$ and $\bo{Y},$ $\btr{\bo{X}\bo{Y}} = \btr{\bo{Y}\bo{X}},$ and the higher powers of eigenvalues necessarily depend on the corresponding powers of $\bo{E}.$  From (\ref{eqn:frecstep1}), we conclude that the Fr\'{e}chet derivative \cite{Coleman2012} of $f_\ur{e}(\bo{E})$ is given by $\mathcal{D}_{f_\ur{e}} = \bo{A}^\H\Big(\bo{I}_m + \bo{A}\bo{X}\bo{A}^\H\Big)^{-1}\bo{A}.$ Next, based on \cite[Cor. 2.6]{Srinivasan2020}, for Hermitian symmetric positive semi-definite matrices $\bo{X},$ the Fr\'{e}chet derivative of the corresponding function $f_2(\bo{Y})$ is given by  $\mathcal{D}_{f_2} = \frac{1}{2}\Big(\mathcal{D}_{f_\ur{e}} + (\mathcal{D}_{f_\ur{e}})^\H\Big) \overset{(a)}{=} \mathcal{D}_{f_\ur{e}},$ where (a) holds because $\mathcal{D}_{f_\ur{e}}$ is Hermitian symmetric. Hence, the first-order approximation for $f(\bo{Y})$ is given by
	\begin{align}
		&\blogdet{\bo{I}_m + \bo{A}\bo{X}\bo{A}^\H} + \langle \mathcal{D}_{f_2}, \bo{E} \rangle =\nonumber\\&\qquad \blogdet{\bo{I}_m + \bo{A}\bo{X}\bo{A}^\H} + \btr{\mathcal{D}_{f_2}\bo{E}},
	\end{align}
	from which (\ref{eqn:tildef}) follows.
	\end{proof}
	Lastly, (\ref{eqn:tr12}) and (\ref{eqn:trc}) follow directly by applying Lemma \ref{lem:fo} to (\ref{eqn:rk2}) and (\ref{eqn:rkc2}). \qed

	\section{Proof of Proposition \ref{prop:ikjk}}
	\label{app:ikjk}
	In order to simplify $\|\bo{\Xi}^\uparrow_k\|$ and $\|\bo{\Xi}^\downarrow_k\|, k=1,\dots,K,$ we first propose the following lemma.

	\begin{lemma}
		\label{lemma:ikjk}
		Let $\bo{A}, \Delta\bo{A} \in \mathbb{C}^{m\times n}, m \leq n, m, n > 0,$ denote matrices with full row rank. Furthermore, let $\bo{\Psi}, \bar{\bo{\Psi}} \in \mathbb{C}^{n\times (n-m)}$ denote matrices whose columns are unit-length basis vectors of the null spaces of $\bo{A}$ and $\bar{\bo{A}} = \bo{A} + \Delta\bo{A},$ respectively, and let $\Delta\bo{\Psi} = \bar{\bo{\Psi}} - \bo{\Psi}.$ Then,
		\begin{align}
			\|\bo{A}\bar{\bo{\Psi}}\| &\leq \|\Delta\bo{A}\|, \label{eqn:an}\\
			\min \|\Delta\bo{\Psi}\| &\leq \|\bar{\bo{A}}^+\| \|\Delta\bo{A}\| \nonumber\\
			&\quad-  \log\negmed\big(1-\|\bo{\Psi}^\H(\Delta\bo{A})^\H\big(\bar{\bo{A}}\bar{\bo{A}}^\H\big)^{-1}\Delta\bo{A}\bo{\Psi}\|\big). \label{eqn:nn}
		\end{align} 
	\end{lemma}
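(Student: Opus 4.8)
The plan is to prove the two bounds separately; the first is immediate, while the second requires projecting $\bo{\Psi}$ onto the null space of $\bar{\bo{A}}$ and re-orthonormalizing. For (\ref{eqn:an}), I would note that since the columns of $\bar{\bo{\Psi}}$ span the null space of $\bar{\bo{A}} = \bo{A} + \Delta\bo{A}$, we have $\bar{\bo{A}}\bar{\bo{\Psi}} = \bo{0}$, hence $\bo{A}\bar{\bo{\Psi}} = -\Delta\bo{A}\bar{\bo{\Psi}}$. Taking the induced Euclidean norm, using submultiplicativity, and noting that $\bar{\bo{\Psi}}$ has orthonormal columns so that $\|\bar{\bo{\Psi}}\| = 1$, I would obtain $\|\bo{A}\bar{\bo{\Psi}}\| = \|\Delta\bo{A}\bar{\bo{\Psi}}\| \le \|\Delta\bo{A}\|$, which is (\ref{eqn:an}).

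For (\ref{eqn:nn}), the strategy is to exhibit one admissible orthonormal null-space basis $\bar{\bo{\Psi}}$ whose deviation from $\bo{\Psi}$ I can control; since $\min\|\Delta\bo{\Psi}\|$ is taken over all such bases, this furnishes an upper bound. Let $\bo{P} = \bo{I}_n - \bar{\bo{A}}^+\bar{\bo{A}} = \bo{I}_n - \bar{\bo{A}}^\H(\bar{\bo{A}}\bar{\bo{A}}^\H)^{-1}\bar{\bo{A}}$ be the orthogonal projector onto the null space of $\bar{\bo{A}}$, where I used $\bar{\bo{A}}^+ = \bar{\bo{A}}^\H(\bar{\bo{A}}\bar{\bo{A}}^\H)^{-1}$ for full-row-rank $\bar{\bo{A}}$. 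The key identity is $\bar{\bo{A}}\bo{\Psi} = (\bo{A}+\Delta\bo{A})\bo{\Psi} = \Delta\bo{A}\bo{\Psi}$, since $\bo{A}\bo{\Psi} = \bo{0}$. This gives $(\bo{I}_n - \bo{P})\bo{\Psi} = \bar{\bo{A}}^+\bar{\bo{A}}\bo{\Psi} = \bar{\bo{A}}^+\Delta\bo{A}\bo{\Psi}$, so that the portion of $\bo{\Psi}$ lying outside the target null space satisfies $\|(\bo{I}_n - \bo{P})\bo{\Psi}\| \le \|\bar{\bo{A}}^+\|\|\Delta\bo{A}\|$, which accounts for the first term of (\ref{eqn:nn}).

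The remaining step is to orthonormalize the projected frame $\bo{P}\bo{\Psi}$, whose columns already lie in the null space of $\bar{\bo{A}}$. Computing its Gram matrix and again invoking $\bar{\bo{A}}\bo{\Psi} = \Delta\bo{A}\bo{\Psi}$, I would obtain $\bo{\Psi}^\H\bo{P}\bo{\Psi} = \bo{I}_{n-m} - (\Delta\bo{A}\bo{\Psi})^\H(\bar{\bo{A}}\bar{\bo{A}}^\H)^{-1}(\Delta\bo{A}\bo{\Psi}) = \bo{I}_{n-m} - \bo{C}$, precisely the matrix $\bo{C}$ in the statement. Provided $\|\bo{C}\| < 1$, the frame has full column rank and $\bar{\bo{\Psi}} = \bo{P}\bo{\Psi}(\bo{I}_{n-m} - \bo{C})^{-1/2}$ is an admissible orthonormal basis (symmetric orthogonalization). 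Writing $\bar{\bo{\Psi}} - \bo{\Psi} = \bo{P}\bo{\Psi}\big[(\bo{I}_{n-m} - \bo{C})^{-1/2} - \bo{I}_{n-m}\big] - (\bo{I}_n - \bo{P})\bo{\Psi}$, applying the triangle inequality with $\|\bo{P}\bo{\Psi}\| \le 1$, and using that $\bo{C}$ is Hermitian positive semi-definite with eigenvalues in $[0,\|\bo{C}\|]$ so that $\|(\bo{I}_{n-m} - \bo{C})^{-1/2} - \bo{I}_{n-m}\| = (1-\|\bo{C}\|)^{-1/2} - 1$, would yield $\min\|\Delta\bo{\Psi}\| \le \|\bar{\bo{A}}^+\|\|\Delta\bo{A}\| + (1-\|\bo{C}\|)^{-1/2} - 1$.

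The final step, which I expect to be the most delicate, is to replace $(1-\|\bo{C}\|)^{-1/2} - 1$ by the cleaner $-\log(1-\|\bo{C}\|)$ to recover (\ref{eqn:nn}). This rests on the scalar inequality $(1-x)^{-1/2} - 1 \le -\log(1-x)$, which I would verify by checking that both sides vanish at $x = 0$ and comparing their derivatives; it holds for $x = \|\bo{C}\|$ sufficiently small, consistent with the $\|\bo{C}\| \ll 1$ assumption flagged in the simulation discussion, and is exactly where the bound loosens for large CSI error. The principal obstacles are therefore (i) correctly handling the non-uniqueness of the null-space basis, i.e., justifying that re-orthonormalizing the projection yields an admissible $\bar{\bo{\Psi}}$ and hence bounds the minimum over all bases, and (ii) ensuring $\|\bo{C}\| < 1$ so that the inverse square root and the logarithm are well defined.
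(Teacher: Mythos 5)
Your proof of (\ref{eqn:an}) is exactly the paper's: from $\bar{\bo{A}}\bar{\bo{\Psi}} = \bo{0}$ one gets $\bo{A}\bar{\bo{\Psi}} = -\Delta\bo{A}\bar{\bo{\Psi}}$, then submultiplicativity and $\|\bar{\bo{\Psi}}\|=1$. For (\ref{eqn:nn}) your route differs in execution but lands on the same object. The paper works with an arbitrary admissible pair $(\bo{\Psi},\bar{\bo{\Psi}})$: from the projector identity $\bo{I}_n - \bar{\bo{\Psi}}\bar{\bo{\Psi}}^\H = \bar{\bo{A}}^+\bar{\bo{A}}$ it derives the implicit relation $\Delta\bo{\Psi} = -\bar{\bo{A}}^+\Delta\bo{A}\bo{\Psi} - \bar{\bo{\Psi}}(\Delta\bo{\Psi})^\H\bo{\Psi}$ together with a quadratic Gram equation whose solutions are $(\Delta\bo{\Psi})^\H\bo{\Psi} = -\bo{Z}(\bo{I}_{n-m}-\bo{C})^{1/2} - \bo{I}_{n-m}$ with $\bo{Z}^\H\bo{Z}=\bo{I}_{n-m}$; it then picks $\bo{Z}=-\bo{I}_{n-m}$ and bounds $\|(\bo{I}_{n-m}-\bo{C})^{1/2}-\bo{I}_{n-m}\|$ via the binomial series of the matrix square root, giving $-\log(1-\|\bo{C}\|)$. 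You instead construct the admissible basis explicitly, $\bar{\bo{\Psi}} = \bo{P}\bo{\Psi}(\bo{I}_{n-m}-\bo{C})^{-1/2}$ with $\bo{P} = \bo{I}_n - \bar{\bo{A}}^+\bar{\bo{A}}$ (L\"{o}wdin orthogonalization); since this gives $\bar{\bo{\Psi}}^\H\bo{\Psi} = (\bo{I}_{n-m}-\bo{C})^{1/2}$, your basis realizes precisely the paper's choice $\bo{Z}=-\bo{I}_{n-m}$, and your decomposition $\Delta\bo{\Psi} = \bo{P}\bo{\Psi}\big[(\bo{I}_{n-m}-\bo{C})^{-1/2}-\bo{I}_{n-m}\big] - (\bo{I}_n-\bo{P})\bo{\Psi}$ is the paper's implicit relation specialized to that basis. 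On one point you are more complete than the paper: the paper never verifies that its chosen $\bo{Z}$ corresponds to an actual pair of null-space bases, whereas your construction proves existence and thereby makes the $\min$ in (\ref{eqn:nn}) rigorous.

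The one genuine weakness is your final step. Having bounded $\|\bo{P}\bo{\Psi}\|\le 1$, you are left with $(1-\|\bo{C}\|)^{-1/2}-1$ and must invoke the scalar inequality $(1-x)^{-1/2}-1 \le -\log(1-x)$. Your derivative comparison only establishes this on $[0,3/4]$ (the left side's derivative overtakes the right side's at $x=3/4$), and the inequality actually fails for $x$ above roughly $0.92$; so as written your argument proves (\ref{eqn:nn}) only for $\|\bo{C}\|$ in that restricted range, while the paper's series bound $\|(\bo{I}_{n-m}-\bo{C})^{1/2}-\bo{I}_{n-m}\| \le \sum_{k\ge 1}\|\bo{C}\|^k/k = -\log(1-\|\bo{C}\|)$ holds for all $\|\bo{C}\|<1$. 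The repair stays entirely inside your framework: since $\bo{\Psi}^\H\bo{P}\bo{\Psi} = \bo{I}_{n-m}-\bo{C}$, the polar decomposition gives $\bo{P}\bo{\Psi} = \bo{W}(\bo{I}_{n-m}-\bo{C})^{1/2}$ with $\bo{W}^\H\bo{W}=\bo{I}_{n-m}$, hence $\big\|\bo{P}\bo{\Psi}\big[(\bo{I}_{n-m}-\bo{C})^{-1/2}-\bo{I}_{n-m}\big]\big\| = \big\|\bo{I}_{n-m}-(\bo{I}_{n-m}-\bo{C})^{1/2}\big\| = 1-\sqrt{1-\|\bo{C}\|} \le \|\bo{C}\| \le -\log(1-\|\bo{C}\|)$, which is both tighter than your intermediate bound and valid for all $\|\bo{C}\|<1$.
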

		
	\begin{proof}
		Combining $\bo{A}\bo{\Psi} = \bo{0}$ and $\bar{\bo{A}} \bar{\bo{\Psi}} = \bo{0},$ we obtain $\bo{A}\bar{\bo{\Psi}} = -\Delta\bo{A} \bar{\bo{\Psi}},$ from which (\ref{eqn:an}) follows by taking the norm of both sides and utilizing the well-known identities (I.1) $\|\bo{M}_1\bo{M}_2\| \leq \|\bo{M}_1\|\|\bo{M}_2\|$ for generic matrices $\bo{M}_1$ and $\bo{M}_2,$ and (I.2) the norm of a unitary matrix is one. Next, from the standard definition of the Hermitian orthogonal projection, see, e.g., \cite{Horn2012}, we have
		\begin{align}
			\bo{I}_n - \bar{\bo{\Psi}}\bar{\bo{\Psi}}^\H = \bar{\bo{A}}^+ \bar{\bo{A}}, \label{eqn:opba}
		\end{align}
		where $\bar{\bo{A}}^+ \coloneqq \bar{\bo{A}}^\H \big(\bar{\bo{A}}\bar{\bo{A}}^\H\big)^{-1}.$ Right multiplying (\ref{eqn:opba}) by $\bo{\Psi}$ and utilizing again that $\bo{A}\bo{\Psi} = \bo{0},$ we obtain upon simplification
		\begin{align}
			\Delta\bo{\Psi} = -\bar{\bo{A}}^+ \Delta\bo{A}\bo{\Psi} - \bar{\bo{\Psi}} (\Delta\bo{\Psi})^\H \bo{\Psi}, \label{eqn:dn}
		\end{align}
		see also \cite{Edelman1998,Papadopoulo2000} for other equivalent formulations. On the other hand, multiplying (\ref{eqn:opba}) on the left and right by $\bo{\Psi}^\H$ and $\bo{\Psi},$ respectively, and simplifying, we obtain
		\begin{align}
			&(\Delta\bo{\Psi})^\H \bo{\Psi} + \bo{\Psi}^\H \Delta\bo{\Psi} + \bo{\Psi}^\H \Delta\bo{\Psi} (\Delta\bo{\Psi})^\H \bo{\Psi} =\nonumber\\&\qquad - \underbrace{\bo{\Psi}^\H (\Delta\bo{A})^\H \big(\bar{\bo{A}}\bar{\bo{A}}^\H\big)^{-1} \Delta\bo{A} \bo{\Psi}}_{\coloneqq\,\bo{C}}. \label{eqn:long2ord}
		\end{align}
		From (\ref{eqn:long2ord}), a generic expression for $(\Delta\bo{\Psi})^\H \bo{\Psi}$ can be obtained as:
		\begin{align}
			(\Delta\bo{\Psi})^\H \bo{\Psi} = - \bo{Z}\big(\bo{I}_{n-m} -  \bo{C}\big)^{\frac{1}{2}} - \bo{I}_{n-m}, \label{eqn:sol1}
		\end{align}
		where $\bo{Z} \in \mathbb{C}^{(n-m) \times (n-m)}$ is an arbitrary matrix\footnote{Note that matrix $\bo{Z}$ does not need to be unitary.} such that $\bo{Z}^\H \bo{Z} = \bo{I}_{n-m}.$ This degree of freedom is because matrices $\bo{\Psi}$ and $\bar{\bo{\Psi}}$ are not unique, and $\bo{Z}$ can be carefully chosen to minimize $\|(\Delta\bo{\Psi})^\H \bo{\Psi}\|.$ However, for our purposes, it is sufficient to choose $\bo{Z} = -\bo{I}_{n-m}.$ Next, for small $\Delta\bo{A},$ we have \cite{Turnbull1930}:
		\begin{align}
			\big(\bo{I}_{n-m} -  \bo{C}\big)^{\frac{1}{2}} = \bo{I}_{n-m} + \sum_{k = 1}^{+\infty} (-1)^k \prod_{l=0}^{k-1} \left(\frac{1}{2} - l\right) \bo{C}^k.
		\end{align}
		Hence, based on (\ref{eqn:sol1}) with $\bo{Z} = -\bo{I}_{n-m},$ we obtain
		\begin{align}
			\|(\Delta\bo{\Psi})^\H \bo{\Psi}\| &{}\leq \left\|\sum_{k = 1}^{+\infty} (-1)^{k} \prod_{l=0}^{k-1} \left(\frac{1}{2} - l\right) \bo{C}^k\right\| \leq \sum_{k = 1}^{+\infty} \frac{1}{k}\left\|\bo{C}\right\|^k \nonumber\\&{}= -\log(1-\|\bo{C}\|). \label{eqn:dnnbound}
		\end{align}
		Lastly, taking the norm of both sides of (\ref{eqn:dn}), utilizing (I.1) and (I.2) along with the well-known identity (I.3) $\|\bo{M}_1 + \bo{M}_2\| \leq \|\bo{M}_1\| + \|\bo{M}_2\|$ for generic matrices $\bo{M}_1$ and $\bo{M}_2,$ and substituting the above result, we obtain (\ref{eqn:nn}). The $\min$ on the left hand side of (\ref{eqn:nn}) is because the bound holds only for appropriately chosen $\bo{Z};$ e.g., for $\bo{Z} = \bo{I}_{n-m},$ $\|(\Delta\bo{\Psi})^\H \bo{\Psi}\| \approx 2.$
	\end{proof}
	
	The results in (\ref{eqn:normik}) and (\ref{eqn:normjk}) follow by applying Lemma \ref{lemma:ikjk} to (\ref{eqn:i_k}) and (\ref{eqn:j_k}), and utilizing identities (I.1), (I.2), and (I.3). \qed

	\end{appendices}
	
	\makeatletter
	\bibliographystyle{IEEEtran}
	\bibliography{IEEEabrv,references}
	
\end{document}